\newcommand{\appsymb}{$\bigstar$}
\newcommand{\pfnote}[1]{}
\newcommand{\nbnote}[1]{}
\newtheorem{definition}{Definition}
\newtheorem{example}{Example}
\newcommand{\calF}{\mathcal{F}}
\newcommand{\calP}{\mathcal{P}}
\newcommand{\pref}{\succ}
\newcommand{\pos}{{\mathrm{pos}}}
\newcommand{\dist}{{\mathrm{dist}}}
\newcommand{\reals}{{\mathbb{R}}}
\newcommand{\POS}{{{\mathrm{POS}}}}
\newcommand{\EMD}{{{\mathrm{EMD}}}}
\newcommand{\rID}{{{\mathrm{rID}}}}
\newcommand{\ID}{{{\mathrm{ID}}}}
\newcommand{\id}{{{\mathrm{id}}}}
\newcommand{\UN}{{{\mathrm{UN}}}}
\newcommand{\AN}{{{\mathrm{AN}}}}
\newcommand{\an}{{{\mathrm{an}}}}
\newcommand{\ST}{{{\mathrm{ST}}}}
\newcommand{\stt}{{{\mathrm{st}}}}
\newcommand{\relvot}{{{\mathrm{freq}}}}
\newcommand{\expswaps}{{{\mathrm{expswaps}}}}
\newcommand{\relswaps}{{{\mathrm{relswaps}}}}
\newcommand{\relphi}{{{\mathrm{rel}\hbox{-}\phi}}}
\definecolor{darkgreen}{rgb}{0,0.5,0}
\definecolor{darkpink}{rgb}{0.75,0.25,0.25}
\definecolor{RED}{rgb}{1,0,0}
\newcommand{\figurecut}[1]{}
\newcommand{\cutfornow}[1]{}
\title{Putting a Compass on the Map of Elections\thanks{An 
extended abstract of this article has been accepted for publication in the 
proceedings of IJCAI 2021 and COMSOC 2021.}}
\author[1]{Niclas Boehmer}
\author[2]{Robert Bredereck}
\author[3]{Piotr Faliszewski}
\author[1]{Rolf Niedermeier}
\author[4]{Stanisław Szufa}
\affil[1]{\small
  Technische Universit\"at Berlin, Algorithmics and Computational 
  Complexity\protect\\
  \{niclas.boehmer,rolf.niedermeier\}@tu-berlin.de}
\affil[2]{\small
  Humboldt-Universit\"at zu Berlin, 
  robert.bredereck@hu-berlin.de}
 \affil[3]{\small
  AGH University,
  faliszew@agh.edu.pl}
 \affil[4]{\small
  Jagiellonian University,
  stanislaw.szufa@uj.edu.pl}
\date{\today}
\begin{document}

\maketitle

\begin{abstract}
  Recently, Szufa et al.~[AAMAS~2020] presented a 
  ``map of elections'' that visualizes a 
  set of 800~elections generated from various statistical cultures.
  While similar elections are grouped together on this map,
  there is no obvious interpretation of the elections' positions.
  We provide such an interpretation
  by introducing four canonical ``extreme'' elections, acting
    as a compass 
  on the map. We use them 
  to analyze both a dataset provided by Szufa et al.\ and a number of
  real-life elections. In effect, we find a new variant of the Mallows
  model and show that it captures real-life scenarios
  particularly 
  well.
\end{abstract}

\section{Introduction}

\citet{szu-fal-sko-sli-tal:c:map} recently proposed a
technique for visualizing sets of ordinal elections---i.e., elections
where each voter ranks the candidates from the most to the least
appealing one---based on given distances between them.
They have applied 
this technique to 800 elections coming from various
\begin{wrapfigure}{r}{8.5cm}
  \centering
  \includegraphics[width=7.3cm]{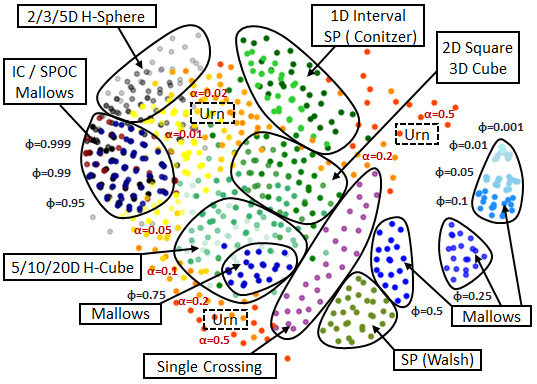}
  \caption{\label{fig:10x100-original}A map for the 10x100 dataset of \citet{szu-fal-sko-sli-tal:c:map}.} \vspace*{-0.2cm}
\end{wrapfigure}
statistical cultures, ranging from the classic urn and Mallows models
to various types of restricted domains, and they
obtained a \emph{map of elections}, where elections
with similar properties are grouped together
(see Figure~\ref{fig:10x100-original}; each dot represents a single
election and, generally, the closer two elections are in the picture,
the smaller is their distance in terms of the metric of Szufa et al.).
Indeed, we see that elections from the same statistical culture,
represented with the same color, are nicely grouped together;
\citet{szu-fal-sko-sli-tal:c:map} 
have also shown other evidence that nearby elections are closely
related.\footnote{The map in the figure regards elections with
  10~candidates and 100~voters, whereas
  \citet{szu-fal-sko-sli-tal:c:map} focused on the case of
  100~candidates and 100~voters. Nonetheless, they also provided such
  smaller datasets and we focus on them because we want to compare
  them to real-life elections, which typically have few candidates.}
Yet, the map 
has two major drawbacks. First, while similar elections are plotted
next to each other, there is no clear meaning to absolute
positions on the map. Second, the map regards
statistical cultures only and it is not obvious where real-life
elections---such as those stored in
PrefLib~\citep{mat-wal:c:preflib}---would lie on the map. Our goal
is to address both these issues.

\begin{figure}
    \begin{subfigure}[b]{0.49\textwidth}
    \centering            
    \includegraphics[width=6.7cm]{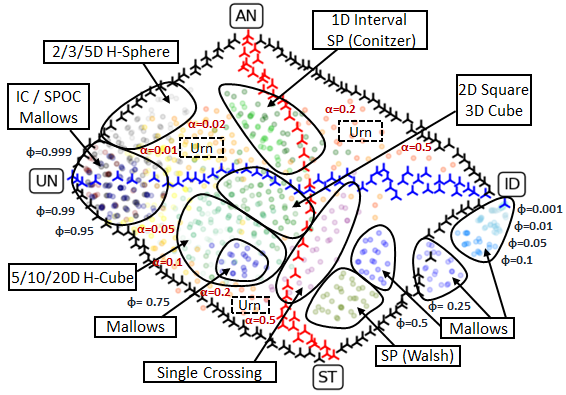}
    \caption{\label{fig:10x100SC}\footnotesize The compass and the dataset of \citet{szu-fal-sko-sli-tal:c:map}.}
  \end{subfigure}
  \hfill
   \begin{subfigure}[b]{0.49\textwidth}
    \centering            
    \includegraphics[width=6.7cm]{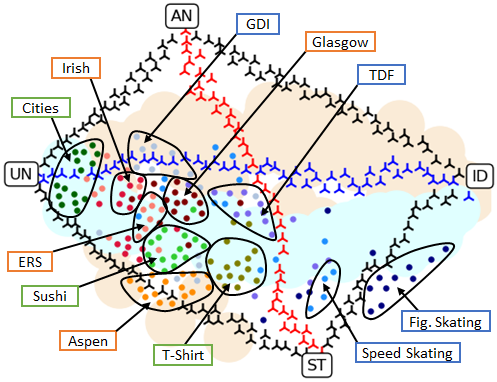}
    \caption{\label{fig:10x100RW}\footnotesize 
    The compass and various real-life elections. }
  \end{subfigure}
  \caption{\label{fig:10x100+compass}Maps of elections for the 10x100
    dataset of \citet{szu-fal-sko-sli-tal:c:map} (on the left) and
    some real-life elections (on the right).  The maps include the
    compass matrices and their connecting paths (shown in black, red,
    and blue).  For clarity, the dots corresponding to the elections
    of \citet{szu-fal-sko-sli-tal:c:map} (in the left figure) are
    shown in lighter colors than in Figure~\ref{fig:10x100-original}.
    On the right, the pale blue area is where Mallows elections end up
    (for various $\phi$ parameters) and the pale orange area is where urn
    elections end up (for various $\alpha$ parameters).}
\end{figure}

We start by looking more closely at the distance metric for elections  that \citet{szu-fal-sko-sli-tal:c:map} used.  The idea is
that given an election with $m$~candidates, one computes an
$m \times m$ \emph{frequency matrix} which specifies what fraction of
the voters ranks each candidate in each position (such matrices are
bistochastic, i.e., their entries are nonnegative and each column and
each row sums up to one).
Measuring the distance between two elections boils
down to computing their frequency matrices and summing up the earth
mover's distances between their columns, where columns are reordered  to minimize the
distance (see Section~\ref{sec:prelim}); \citet{szu-fal-sko-sli-tal:c:map} call this distance \emph{positionwise}.
Using frequency matrices makes it possible to compare elections with
different numbers of voters (effectively, by reweighting them), and
reordering the columns ensures that candidate names are irrelevant (as
suggested for such settings by
\citet{fal-sko-sli-szu-tal:c:isomorphism}).

Our first result is an algorithm that, given a bistochastic matrix and
a number~$n$, finds \emph{some} election with $n$~voters whose
frequency matrix is (nearly) identical to the one from the input (achieving
perfect accuracy 
is not always possible,
but our algorithm achieves the best result one may, in general, hope for).  As
a consequence, instead of considering elections, we may directly look
at the space of bistochastic matrices, which simplifies many
discussions.
Thus, we often speak of matrices and elections interchangeably.

Next we form what we call a \emph{compass}. The idea is to pick
matrices that, on the one hand, are far away from each other, and, on
the other hand, have natural interpretations. Specifically, we
consider the following four ``extreme'' matrices, corresponding to four types of
(dis)agreement among the voters:
\begin{enumerate}
\item The identity matrix, $\ID$, modelling 
  perfect agreement (all voters agree on a single
  preference order).
\item The uniformity matrix, $\UN$, modelling
  lack of agreement (each candidate takes each position equally often).
\item The stratification matrix, $\ST$, modelling partial agreement (voters agree
  that half of the candidates are better than the other half, but
  lack agreement on anything else).
\item The antagonism matrix, $\AN$, 
  modelling
  conflict (half of the voters have
  opposite preference orders to the other half). 
\end{enumerate}
For each two of these ``compass'' matrices, we also consider a
spectrum of their convex combinations (``paths'' between the
matrices). In visualizations, these paths appear as a
parallelogram-like shape with corresponding ``diagonals''; see, e.g.,
Figure~\ref{fig:10x100SC}, where we apply the compass method to the
dataset from Figure~\ref{fig:10x100-original} (the black, blue, and
red points are certain convex combinations of the corresponding
endpoints, which are the compass matrices).
Throughout the rest of the paper we explain where these figures come from.

The compass 
allows us to make several observations. For example, in
Figure~\ref{fig:10x100SC} we see that 1D Interval elections are closer
to the antagonism matrix, whereas higher-dimensional hypercube
elections are closer to the stratification one. This is intriguing as,
on a formal level, these two kinds of elections are very similar.
Figure~\ref{fig:10x100RW}, which shows a map of real-life elections
(some 
from PrefLib and some new ones) is even more striking. Most of the real-life
elections (including all political ones) end up in one ``quadrant'' of
the parallelogram, and essentially all elections end up in the
vicinity of some Mallows elections (in Figure~\ref{fig:10x100RW}, the pale blue area is where
Mallows elections end up, depending on the parameter of the model; the
pale orange area is where urn elections end up).  So,
if one were to run experiments with a single statistical culture, 
the Mallows model might be a wise choice. 

Yet, 
we find that natural ways
of sampling Mallows elections (e.g., by
choosing the Mallows 
parameter uniformly at random, or using a fixed parameter 
for different numbers of candidates), which are used in many research
papers, are biased.
We propose a normalization and argue, both
theoretically and by considering the compass, that it produces more
balanced results.  In other words, we recommend using the Mallows
model, but in conjunction with our normalization.

We provide a detailed analysis and discussion of the above-mentioned results in
the main part of the paper and in the appendix (results marked by (\appsymb) are proven in the appendix.).

\section{Preliminaries}\label{sec:prelim}
Given an integer~$t$, we write~$[t]$ to denote the set
$\{1, \ldots, t\}$. By~$\reals_+$ we mean the set of nonnegative real
numbers.
Given a vector $x = (x_1, \ldots, x_m)$, we interpret it as an
$m \times 1$ matrix, i.e., we use column vectors.
For a matrix $X$, we write $x_{i,j}$ to refer to the entry in its $i$-th
row and $j$-th column.\smallskip

\noindent\textbf{Elections.}
An election $E$ is a pair $(C,V$), where $C = \{c_1, \ldots, c_m\}$ is
a set of candidates and $V = (v_1, \ldots, v_n)$ is a collection of
voters. Each voter $v \in V$ has a preference order $\pref_v$, which
ranks the candidates from the most to the least desirable one
according to~$v$. If $v$ prefers candidate~$a$ to candidate~$b$, then
we write $v \colon a \pref b$, and we extend this notation to more
candidates in a natural way.
For a voter $v$ and a candidate $c$, we write $\pos_v(c)$ to denote
the position on which $v$ ranks $c$ (the top-ranked candidate has
position $1$, the next one has position $2$, and so on).
We refer to both the voters and their preference orders as the
votes. The intended meaning will always be clear from the context.
\smallskip

\noindent\textbf{Position and Frequency Matrices.}
Let $E = (C,V)$ be an election, where $C = \{c_1, \ldots, c_m\}$ and
$V = (v_1, \ldots, v_n)$.  For a candidate $c \in C$ and position
$i \in [m]$, we write~$\#\pos_E(c,i)$ to denote the number of
voters in election $E$ that rank $c$ on position~$i$. By~$\#\pos_E(c)$
we mean the vector:
\[
  ( \#\pos_E(c,1), \#\pos_E(c,2), \ldots, \#\pos_E(c,m) ).
\]
The \emph{position matrix} for election $E$, 
denoted $\#\pos(E)$, is the $m \times m$ matrix that has  vectors~$\#\pos_E(c_1), \ldots, \#\pos_E(c_m)$ as its columns.
We also consider
vote frequencies rather than absolute counts.  To this end, for a
candidate $c$ and a position $i \in [m]$, let $\#\relvot_E(c,i)$
be~$\frac{\#\pos_E(c,i)}{n}$, let vector $\#\relvot_E(c)$
be~$ ( \#\relvot_E(c,1), \ldots, \#\relvot_E(c,m) ), $
and let the \emph{frequency matrix} for election $E$, denoted~$\#\relvot(E)$, consist of columns~$\#\relvot_E(c_1), \ldots, \#\relvot_E(c_m)$.

Note that in each position matrix, each row and each column
sums up to the number of voters in the election. Similarly, in each
frequency matrix, the rows and columns sum up to one (such matrices
are called bistochastic).
For a positive integer $m$, we write $\calF(m)$ [$\calP(m)$] to denote the set of
all $m \times m$ frequency [position] matrices.

\begin{example}
  Let $E = (C,V)$ be an election, where 
  $C = \{a,b,c\}$,  $V = (v_1, \ldots, v_6)$, and the preference orders are
  $v_1 \colon a \pref b \pref c$,
  $v_2 \colon a \pref b \pref c$,
  $v_3 \colon a \pref b \pref c$,
  $v_4 \colon b \pref a \pref c$,
  $v_5 \colon c \pref a \pref b$,
  $v_6 \colon c \pref a \pref b$. 
  The position and frequency matrices for this election are:
  \begin{align*}
     \kbordermatrix{ & a & b & c  \\
    1 &                3 & 1 & 2  \\
    2 &                3 & 3 & 0  \\
    3 &                0 & 2 & 4   
    }
                               \text{\quad and\quad}
     \kbordermatrix{ & a & b & c  \\
    1 &                \nicefrac{1}{2}  & \nicefrac{1}{6}  & \nicefrac{1}{3} \\
    2 &                \nicefrac{1}{2}  & \nicefrac{1}{2}  & 0  \\
    3 &                              0  & \nicefrac{1}{3}  & \nicefrac{2}{3}  
    }
  \end{align*}  
\end{example}
\smallskip

\noindent\textbf{Earth Mover's Distance (EMD).}
Let $x = (x_1, \ldots, x_t)$ and $y = (y_1, \ldots, y_t)$ be two
vectors from $\reals_+^t$, whose entries sum up to $1$.
The \emph{earth mover's distance} between $x$ and~$y$, denoted $\EMD(x,y)$,
is defined as the lowest total cost of operations that transform
vector~$x$ into vector $y$, where each operation is of the form
``\emph{subtract $\delta$ from
position~$i$ and add $\delta$ to position~$j$}'' and costs
$\delta \cdot |i-j|$. Such an operation is legal if the current value
at position~$i$ is at least $\delta$.
$\EMD(x,y)$ can be computed in polynomial time using a
greedy algorithm.
\smallskip

\noindent\textbf{Positionwise Distance.}
Let $E = (C,V)$ and $F = (D,U)$ be two elections
with $m$ candidates each
(we do not require that $|V| = |U|$). 
The positionwise distance between~$E$ and~$F$,
denoted $\POS(E,F)$, is defined
in terms of 
frequency matrices
 $\#\relvot(E) = (e_1, \ldots, e_m)$ 
and~$\#\relvot(F) = (f_1, \ldots, f_m)$ as follows \citep{szu-fal-sko-sli-tal:c:map}:
\[
  \textstyle
  \POS(E,F) := \min_{\sigma \in S_m} \left( \sum_{i=1}^m \EMD(e_i, f_{\sigma(i)}) \right),
\]
where $S_m$ is the permutation group on $m$~elements.
In other words, the positionwise distance is the sum of the
 earth mover's
distances between the frequency vectors of the candidates from the two
elections, with candidates/columns matched optimally according to $\sigma$.
The positionwise distance is invariant to renaming the candidates and
reordering the voters. 
\smallskip

\noindent\textbf{Statistical Cultures.}
We define the following
three statistical cultures, i.e., models for generating random
elections:

\begin{enumerate}
\item
  Under the Impartial Culture (IC) model,
  we sample all votes uniformly at random.
\item
  The Pólya-Eggenberger urn model~\citep{berg1985paradox}
  uses 
  a nonnegative parameter $\alpha$, which gives the level of
  correlation between the votes (this parameterization is due to
  \citet{mcc-sli:j:similarity-rules}). To generate an election with
  $m$ candidates, we take an urn containing
  one copy of each possible preference order and generate
  the votes iteratively: 
  In each step we draw an order from the urn (this is the newly
  generated vote) and return it to the urn together
  with $\alpha m!$ copies. For $\alpha = 0$, we obtain the IC model.
\item The Mallows model~\citep{mal:j:mallows} uses parameter
  $\phi \in [0,1]$ and a central preference order $v$.  Each vote is
  generated randomly and independently. The probability of generating
  a vote~$u$ is proportional to $\phi^{\kappa(u, v)}$, where
  $\kappa(u, v)$ is the swap distance between $u$ and $v$ (i.e., the
  minimum number of swaps of adjacent candidates that transform
  $u$ into~$v$).

\end{enumerate}

Sometimes, we refer to other statistical cultures used by
\citet{szu-fal-sko-sli-tal:c:map}. We do not define them formally
here, but we attempt to make our discussions intuitively
clear.\smallskip

\noindent\textbf{Maps of Elections.}
\citet{szu-fal-sko-sli-tal:c:map} drew a \emph{map of elections} by
computing the positionwise distances between 800 elections drawn
from various statistical cultures and visualizing them using the
force-directed algorithm of~\citet{fruchterman1991graph}. 
They focused on elections with 100 candidates and 100 voters, but also
generated smaller datasets, available on their website.  We consider
their dataset with 10 candidates and 100 voters (see
Figure~\ref{fig:10x100-original} for its map).  We use the same
algorithm as they do for our visualizations, except that for each two
elections we set their attraction coefficient to be the square of
their positionwise distance (and not the distance itself, as they do;
our approach groups similar elections more tightly and gives more
visually appealing results for elections with few candidates).

We stress that the maps that both we and Szufa et
al.~\cite{szu-fal-sko-sli-tal:c:map} provide are helpful tools to
illustrate the distances between particular (families of) elections,
but are certainly not perfect. For example, since the visualization
algorithm is randomized, we can get slightly different maps for each
run of the algorithm. The visualizations also depend on the exact
composition of the set of depicted elections (for example, a map where
50\% of the elections came from the IC model would make it seem that
these elections cover a much larger proportion of the map than if IC
elections constituted only 10\% of the elections).  Thus, whenever we
say that some two elections are close to each other, we mean that
their positionwise distance is small. While this is typically
reflected by these two elections being close on the map, on its own,
closeness on the map does not suffice for such a claim.

\section{Recovering Elections from Matrices}\label{se:recov}

Throughout this paper we often deal with frequency matrices of
elections. While computing a frequency matrix of an election is
straightforward, 
the reverse direction is less clear.

We first observe that each $m\times m$ position matrix has a corresponding
$m$-candidate election with at most $m^2-2m+2$ distinct preference
orders. This was shown by \citet[Theorem 7]{leep1999marriage} (they
speak of ``semi-magic squares'' and not ``position matrices'' and
show a decomposition of a matrix into permutation matrices, which
correspond to votes in our setting).
Their proof
lacks some algorithmic details which we provide in the appendix.

\begin{restatable}[\appsymb]{proposition}{polyalgo}\label{pro:poly-algo}
  Given a position matrix $X\in \calP(m)$, one can compute
  in $O(m^{4.5})$ time  an election~$E$ that contains at most $m^2-2m+2$
  different votes such that $\#\pos(E) =
  X$. 
\end{restatable}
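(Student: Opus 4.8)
The plan is to reduce the proposition to a constructive, integral version of the Birkhoff--von Neumann theorem. Since $X \in \calP(m)$, all rows and all columns of $X$ sum to the same value $n$ (the number of voters), so $X$ is a nonnegative integer matrix. I would build the election $E$ by computing a decomposition $X = \sum_{k=1}^{r} a_k P_k$, where each $P_k$ is a permutation matrix (viewed as a vote), the $a_k$ are positive integers summing to $n$, the $P_k$ are pairwise distinct, and $r \le m^2-2m+2$; the election then has $a_k$ voters casting the vote corresponding to $P_k$ for each $k$, and clearly $\#\pos(E) = X$. That such a decomposition with $r \le m^2-2m+2$ exists is exactly the content of \citet[Theorem~7]{leep1999marriage}: the underlying reason is that the $m \times m$ doubly stochastic matrices form the Birkhoff polytope, whose affine hull has dimension $(m-1)^2 = m^2-2m+1$, so a Carath\'eodory-type argument writes any such matrix---in particular $\tfrac{1}{n}X$---as a combination of at most $(m-1)^2+1$ vertices (permutation matrices), and the argument can be carried out keeping the coefficients integral. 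Our task is thus to turn this into an algorithm running in $O(m^{4.5})$ time.

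First I would implement the following greedy peeling. Maintain a residual nonnegative integer matrix $R$, initially $R = X$, always with all row and column sums equal to some $\nu \ge 0$; while $R \ne 0$, view the support of $R$ as a bipartite graph on $m+m$ vertices. Because all row and column sums of $R$ are equal, this graph satisfies Hall's condition (any set $S$ of rows carries weight $\nu|S|$, which fits inside its neighbourhood, forcing $|N(S)| \ge |S|$), so it has a perfect matching; I would find one with the Hopcroft--Karp algorithm. The matching is a permutation $\sigma$ with $R_{i,\sigma(i)} > 0$ for all $i$; setting $a := \min_i R_{i,\sigma(i)} \ge 1$, I record the pair $(\sigma, a)$ and replace $R$ by $R - a P_\sigma$, which is again a nonnegative integer matrix with equal row and column sums and has at least one more zero entry than before. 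Since zeros never disappear, this terminates after at most $m^2$ iterations with $R = 0$, producing integer coefficients, and correctness is immediate. A crude count already bounds the number of recorded (hence distinct) permutations by $m^2 - m + 1$, because the final step turns $m$ entries to zero and every earlier step turns at least one entry to zero. To reach the sharper bound $m^2-2m+2$ I would follow \citet{leep1999marriage}: either by choosing the matchings more carefully, or by interleaving a dimension-reduction step that, whenever the number of recorded permutation matrices would exceed $(m-1)^2+1$, uses an integral affine dependency among them (which must exist, as more than $\dim+1$ of them lie in the $(m-1)^2$-dimensional affine hull) to shift the multiplicities and drop one permutation while keeping everything integral.

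For the running time, there are $O(m^2)$ iterations (at most $m^2-2m+2$ of them), each dominated by a single Hopcroft--Karp computation on a bipartite graph with $2m$ vertices and at most $m^2$ edges, which costs $O(m^2 \sqrt{m}) = O(m^{2.5})$; the entry values $R_{i,j}$ are carried as capacities and do not enlarge the graph, and the bookkeeping for the dimension-reduction steps is dominated by this, so the total is $O(m^{4.5})$. The hard part will be precisely the interaction between integrality and the tight count: the real (or rational) version of everything above is routine, but forcing the number of distinct votes all the way down to $m^2-2m+2$ while keeping the multiplicities integer---and without, say, spending $O(m^6)$ time on linear-algebraic dependency computations---is the delicate point, and this is where the argument of \citet{leep1999marriage}, appropriately made constructive, is essential.
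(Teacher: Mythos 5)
Your proposal matches the paper's proof essentially step for step: the same greedy peeling that repeatedly extracts a perfect matching from the support of the residual matrix (justified via Hall's condition using the equal row/column sums), subtracts the largest integer multiple of the corresponding permutation matrix, and bounds the running time by $O(m^2)$ matching computations at $O(m^{2.5})$ each. The paper likewise only proves the crude $m^2-m+1$ bound on the number of distinct votes directly and, exactly as you do, defers the sharper $m^2-2m+2$ count to the argument of \citet{leep1999marriage}, so the ``delicate point'' you flag is left at the same level of detail there as in your write-up.
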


Next, we consider the issue of recovering elections based on frequency
matrices. Given an $m \times m$ bistochastic matrix $X$ and a number
$n$ of voters, we would like to find an election~$E$ with position
matrix $nX$. This may be impossible as $nX$ may have fractional
entries, but we can get very close to this~goal.
The next proposition shows how to achieve it, and justifies speaking
of elections and frequency matrices interchangeably.

\begin{restatable}{proposition}{freqtopos}\label{pro:frequency}
  Given an $m \times m$ bistochastic matrix $X$ and an integer~$n$,
  one can compute in polynomial time an election~$E$ with $n$ voters
  whose position matrix~$P$ satisfies $|nx_{i,j} - p_{i,j}| \leq 1$
  for each $i, j \in [m]$ and, under this condition, minimizes the
  value $\sum_{1 \leq i,j \leq m} |nx_{i,j} - p_{i,j}|$.
\end{restatable}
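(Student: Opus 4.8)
The plan is to move from elections to their position matrices, solve a combinatorial matrix-rounding problem, and move back. By Proposition~\ref{pro:poly-algo} (equivalently, by the Birkhoff--von Neumann theorem), an $m$-candidate election with $n$ voters and position matrix $P$ exists precisely when $P$ is a nonnegative integer $m \times m$ matrix all of whose rows and columns sum to $n$, and such an election can then be constructed from $P$ in polynomial time. Writing $A := nX$ (a nonnegative real matrix whose every row and column sum equals the integer $n$), it thus suffices to compute, in polynomial time, a nonnegative integer matrix $P$ with all row and column sums equal to $n$ satisfying $|a_{i,j} - p_{i,j}| \le 1$ for all $i,j$ and minimizing $\sum_{i,j} |a_{i,j} - p_{i,j}|$ subject to that constraint; Proposition~\ref{pro:poly-algo} then turns $P$ into the desired election.

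First I would show the feasible set is nonempty. Let $\lfloor A \rfloor$ be the entrywise floor of $A$ and $F := A - \lfloor A \rfloor$ the matrix of fractional parts, so $0 \le f_{i,j} < 1$. Because every row and column sum of $A$ is the integer $n$ and every row and column sum of $\lfloor A \rfloor$ is an integer, the numbers $\rho_i := \sum_j f_{i,j}$ and $\gamma_j := \sum_i f_{i,j}$ are nonnegative integers with $\sum_i \rho_i = \sum_j \gamma_j$. The polytope of all real matrices $Z$ with $0 \le z_{i,j} \le 1$, $\sum_j z_{i,j} = \rho_i$, and $\sum_i z_{i,j} = \gamma_j$ is nonempty (it contains $F$) and its defining system is totally unimodular with integral right-hand side, so it has a $\{0,1\}$-valued vertex $R$. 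Then $P_0 := \lfloor A \rfloor + R$ is a nonnegative integer matrix with all row and column sums equal to $n$, and $|a_{i,j} - (P_0)_{i,j}| = |f_{i,j} - r_{i,j}| \le \max(f_{i,j}, 1 - f_{i,j}) \le 1$, so $P_0$ is feasible.

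For the optimization step I would cast the problem as a transportation / min-cost flow instance. In any feasible solution $p_{i,j}$ is an integer in $[\ell_{i,j}, u_{i,j}]$ with $\ell_{i,j} := \max(0, \lceil a_{i,j} \rceil - 1)$ and $u_{i,j} := \lfloor a_{i,j} \rfloor + 1$, and $c_{i,j}(t) := |t - a_{i,j}|$ is convex in the integer variable $t$ (its successive increments are nondecreasing). Build the bipartite network with a source sending $n$ units into each of $m$ row nodes, each of $m$ column nodes sending $n$ units to a sink, and an arc from row node $i$ to column node $j$ carrying $p_{i,j}$ units, with lower bound $\ell_{i,j}$, upper bound $u_{i,j}$, and per-unit costs equal to the successive increments of $c_{i,j}$ (a constant number of unit-capacity parallel arcs suffices, since $u_{i,j} - \ell_{i,j} \le 2$, and nondecreasing increments make the convex cost realized correctly). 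A feasible integral flow exists by the previous paragraph, and a minimum-cost flow is integral (the constraint matrix is totally unimodular) and yields a feasible $P$ minimizing $\sum_{i,j} |a_{i,j} - p_{i,j}|$ over all feasible matrices. This is polynomial even with $n$ given in binary: substituting $p_{i,j} = \ell_{i,j} + q_{i,j}$ and using $\ell_{i,j} \ge a_{i,j} - 1$, each row sum $\sum_j q_{i,j} = n - \sum_j \ell_{i,j}$ is at most $m$ (and similarly for columns), so the residual flow routes at most $m^2$ units through $O(m)$ nodes and $O(m^2)$ arcs of $O(1)$ capacity; computing the floors, solving this flow, and applying Proposition~\ref{pro:poly-algo} are all polynomial-time.

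The delicate points, and where I would concentrate, are the feasibility argument --- that $A$ can be rounded to an integer matrix that preserves every row and column sum while changing each entry by at most $1$ --- and the verification that $|t - a_{i,j}|$ is convex over the relevant integers so that the min-cost flow model is valid and has an integral optimum. Everything else (the bound $\ell_{i,j} \ge a_{i,j} - 1$ keeping the flow network polynomial in $m$, and translating between elections and position matrices via Proposition~\ref{pro:poly-algo}) is routine bookkeeping.
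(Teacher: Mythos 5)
Your proof is correct, and its skeleton matches the paper's: split off $\lfloor nX\rfloor$, round the fractional-part matrix via an integral bipartite flow, and recover the election with Proposition~\ref{pro:poly-algo}. The differences lie in the two key sub-steps. For feasibility, the paper's main text invokes the randomized dependent-rounding procedure of \citet{gan-khu-par-sri:j:dependent-rounding} (which preserves the fractional degrees with certainty), whereas you obtain the same $\{0,1\}$ rounding matrix deterministically as an integral vertex of a totally unimodular transportation polytope; these are two proofs of the same fact, and yours is the more self-contained. For the minimization, the paper's appendix builds a min-cost flow network in which each $p_{i,j}$ is restricted a priori to $\{\lfloor nx_{i,j}\rfloor,\,\lfloor nx_{i,j}\rfloor+1\}$, with edge cost $1-2y_{i,j}$ encoding the marginal cost of rounding up; you instead optimize over the full feasible range $\{\max(0,\lceil nx_{i,j}\rceil-1),\dots,\lfloor nx_{i,j}\rfloor+1\}$ using convex-cost parallel unit arcs. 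Your formulation is slightly more general: when $nx_{i,j}$ is an integer, the constraint $|nx_{i,j}-p_{i,j}|\le 1$ also permits $p_{i,j}=nx_{i,j}-1$, an option the paper's network silently excludes (and whose harmlessness would, strictly speaking, require an exchange argument), so your version establishes the ``minimizes'' clause over the literal feasible set without that extra step. Your observation that subtracting the lower bounds leaves residual demands of at most $m$ per node, so the flow computation is polynomial in $m$ rather than in $n$, is likewise a point the paper leaves implicit.
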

\begin{proof}
  We use randomized dependent rounding in the following algorithm; see
  Appendix~A for a deterministic algorithm, which also performs the
  minimization step.

  We start by computing matrix $Y$
  where each entry $y_{i,j}$ is equal to
  $nx_{i,j} - \lfloor nx_{i,j}\rfloor$.
  All entries of $Y$ are between~$0$
  and~$1$, and each row and each column
  of~$Y$ sums up to a (possibly different) integer.  
  We construct an edge-weighted bipartite
  graph~$G$ with vertex sets~$A = \{a_1, \ldots, a_m\}$ and~$B =
  \{b_1, \ldots, b_m\}$. For each two vertices~$a_i$
  and~$b_j$, we have a connecting edge of
  weight~$y_{i,j}$. For each vertex $c \in A \cup
  B$, we let its fractional degree~$\delta_G(c)$ be the sum of the weights of the edges touching it.
  Then we invoke the dependent rounding procedure of
  \citet{gan-khu-par-sri:j:dependent-rounding} on this graph, and in
  polynomial time we obtain an unweighted bipartite graph
  $G'$ with the same two vertex sets, such that the (standard) degree
  of each vertex $c \in A \cup B$ in $G'$ is equal to $\delta_G(c)$
  (note that dependent rounding is computed via a randomized
  algorithm, but this condition on the degrees is always satisfied,
  independently of the random bits selected).
  Using $G'$, we form an $m \times m$ matrix $D$ such that for each $i,j \in [m]$,~$d_{i,j}$ is $1$ if $G'$ contains an edge between $i$ and $j$, and
  $d_{i,j} = 0$ otherwise.
  Finally, we compute matrix $P = \lfloor nX \rfloor + D$.
   
  The entries of~$P$ differ from those of~$nX$ by at most one, and the
  rows and columns of $P$ sum up to $n$ (to see it, consider the
  degrees of the vertices in~$G'$).
  So, we obtain the desired election by invoking
  Proposition~\ref{pro:poly-algo} on matrix $P$.
\end{proof}

\section{Setting up the Compass}\label{se:setcomp}
Our ``compass'' consists of two main components: Four matrices that
occupy very different areas of the election space 
and represent
different types of (dis)agreement among the voters, and six paths
consisting of their convex combinations.
We describe these components below.

\subsection{The Four Matrices}\label{sec:four}

The first two matrices are the \emph{identity} matrix, $\ID_m$, with
ones on the diagonal and zeros elsewhere, and the \emph{uniformity}
matrix, $\UN_m$, with each entry equal to $\nicefrac{1}{m}$.
The identity matrix corresponds to elections where each voter has the
same preference order, i.e., there is a common ordering of the
candidates from the most to the least desirable one. In contrast,
the uniformity matrix captures elections where each candidate is
ranked on each position equally often, i.e., where, in aggregate, all
the candidates are viewed as equally good. Uniformity elections are
quite similar to the IC ones
and, in the limit, indistinguishable from them.
Yet, for a fixed number of voters, typically IC elections are at some
(small) positionwise distance from uniformity.
\medskip

The next matrix, \emph{stratification}, is defined as
follows (we assume that $m$ is even):
\[
  \ST_m = \begin{bmatrix}
    \UN_{\nicefrac{m}{2}} & 0 \\
    0 & \UN_{\nicefrac{m}{2}}
  \end{bmatrix}.
\]
Stratification matrices correspond to elections where the voters agree
that half of the candidates are 
more desirable than the other half, but, in aggregate, are unable to
distinguish between the qualities of the candidates in each group.
\medskip

For the next matrix, we need one more piece of notation.
Let $\rID_m$ be the matrix obtained by reversing the order of the columns of the identity matrix $\ID_m$. We define the \emph{antagonism} matrix,
$\AN_m$, to be
$
\textstyle
\nicefrac{1}{2} \ID_m+\nicefrac{1}{2} \rID_m.
$
Such matrices are generated, e.g., by
elections where half of the voters rank the
candidates in one order, and half of the voters rank them in the
opposite one, so there is a clear conflict. 
In some sense, stratification and antagonism are based on
similar premises. Under stratification, the group of candidates is
partitioned into halves with different properties, whereas in
antagonism (for the case where half of the voters rank the candidates in the same order)  the voters are partitioned. However, the nature of the
partitioning is, naturally, quite different.\medskip

We chose the above matrices
because they capture natural, intuitive
phenomena and 
seem to occupy very different areas of the
space of elections.  To see that the latter holds, let us calculate
their positionwise distances (for further arguments see also Appendix~B).

\begin{restatable}[\appsymb]{proposition}{dist}\label{pr:calc}
    If $m$ is divisible by $4$, then it holds that:
  \begin{enumerate}
  \item $\POS(\ID_m,\UN_m) = \frac{1}{3}(m^2-1)$,
  \item $\POS(\ID_m,\AN_m) = \POS(\UN_m,\ST_m) = \frac{m^2}{4}$,
  \item
    $\POS(\ID_m,\ST_m) = \POS(\UN_m,\AN_m)  = 
    \frac{2}{3}(\frac{m^2}{4}-1)$,
  \item $\POS(\AN_m,\ST_m) = \frac{13}{48} m^2 - \frac{1}{3}$.
  \end{enumerate}
\end{restatable}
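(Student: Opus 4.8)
The plan is to compute each of the seven positionwise distances by exploiting the structure of the four compass matrices, which have highly regular column sets. The key observation is that $\POS(E,F)$ requires us to find an optimal matching $\sigma \in S_m$ between columns of $\#\relvot(E)$ and columns of $\#\relvot(F)$, but for most of these pairs the column sets are so symmetric that the matching is either forced or irrelevant. Specifically: $\ID_m$ has columns that are the $m$ distinct standard basis vectors $e^{(1)},\dots,e^{(m)}$; $\UN_m$ has all columns equal to the flat vector $u = (1/m,\dots,1/m)$; $\ST_m$ has $m/2$ columns equal to $(2/m,\dots,2/m,0,\dots,0)$ (uniform on the top half) and $m/2$ columns uniform on the bottom half; and $\AN_m$ has columns of the form $\frac12(e^{(i)} + e^{(m+1-i)})$ for $i \in [m]$, i.e., $m/2$ distinct ``paired spike'' vectors, each appearing twice (for $i$ and $m+1-i$). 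So each distance reduces to: choose the optimal pairing of column-types, then sum a small number of explicitly computable $\EMD$ values, each between two simple vectors.

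The concrete steps I would carry out, in order, are: (1) Establish a helper formula for $\EMD$ between a spike $e^{(i)}$ and a uniform vector: $\EMD(e^{(i)}, u) = \frac1m\sum_{j=1}^m |i-j|$, and sum over $i \in [m]$ to get $\POS(\ID_m,\UN_m) = \frac1m\sum_{i=1}^m\sum_{j=1}^m|i-j|$; this double sum evaluates to $\frac{m^2-1}{3}$ by the standard identity $\sum_{i,j}|i-j| = \frac{m(m^2-1)}{3}$. (2) For $\POS(\ID_m,\AN_m)$: the columns of $\ID_m$ are spikes $e^{(1)},\dots,e^{(m)}$ and the columns of $\AN_m$ are the paired spikes, each twice; match spike $e^{(i)}$ either to the pair containing $e^{(i)}$ or to the partner pair $\frac12(e^{(i)}+e^{(m+1-i)})$ — argue by symmetry and convexity of $\EMD$ in each argument that pairing $e^{(i)}$ with $\frac12(e^{(i)}+e^{(m+1-i)})$ is optimal, giving $\EMD = \frac12|i - (m+1-i)| = \frac12|2i - m - 1|$, and sum over $i \in [m]$; this telescopes to $\frac{m^2}{4}$ when $4 \mid m$. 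For $\POS(\UN_m,\ST_m)$, match each uniform column of $\UN_m$ to a half-uniform column of $\ST_m$; by symmetry half the matches go to top-half-uniform and half to bottom-half-uniform, each contributing $\EMD(u, \text{half-uniform})$, which I compute directly by moving mass from the empty half; this also comes out to $\frac{m^2}{4}$. (3) For $\POS(\ID_m,\ST_m)$: match each spike $e^{(i)}$ to a half-uniform column; the optimal choice sends spikes $i \le m/2$ to the top-half-uniform vector and $i > m/2$ to the bottom-half-uniform one (or vice versa — argue this beats the crossing matching), reducing to $\frac{2}{m}\sum_{i=1}^{m/2}\sum_{j=1}^{m/2}|i-j|$, which equals $\frac{2}{3}(\frac{m^2}{4}-1)$. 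Symmetrically $\POS(\UN_m,\AN_m)$: match the $m$ identical uniform columns to the $m$ paired-spike columns (matching is irrelevant since all source columns are equal); compute $\EMD(u, \frac12(e^{(i)}+e^{(m+1-i)})) = \frac12(\EMD(u,e^{(i)}) + \EMD(u,e^{(m+1-i)}))$ only if $\EMD$ is linear here — more carefully, compute it directly and sum, arriving at the same value as $\POS(\ID_m,\ST_m)$ (the equality of these two is itself worth flagging as a sanity check). (4) For $\POS(\AN_m,\ST_m)$: this is the messiest; match the $m/2$ half-uniform columns of $\ST_m$ against the $m/2$ distinct paired-spike columns of $\AN_m$ (each type appearing with multiplicity $2$, so the matching respects multiplicities automatically). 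A paired spike $\frac12(e^{(i)}+e^{(m+1-i)})$ with $i \le m/2$ has one spike in the top half and one in the bottom half, so $\EMD$ to either half-uniform vector is the same by the up-down symmetry; compute $\EMD(\frac12(e^{(i)}+e^{(m+1-i)}), \text{top-half-uniform})$ explicitly (move the bottom spike's mass up into the top half, plus spread the top spike), sum over $i$ with the doubling, and obtain $\frac{13}{48}m^2 - \frac13$.

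The main obstacle is twofold. First, rigorously justifying that the stated column matchings are \emph{optimal} rather than merely \emph{feasible} — for each pair I need a short exchange/convexity argument showing no permutation does better; the cleanest tool is that $\EMD(\cdot, y)$ is convex (it is a min-cost transportation LP), together with the symmetry group of each matrix's column multiset, so that the symmetric matching is optimal. Second, the $\POS(\AN_m,\ST_m)$ computation involves an $\EMD$ between two genuinely spread-out vectors (a two-spike vector and a half-uniform vector), so the greedy $\EMD$ evaluation and the subsequent sum over $i$ are where real arithmetic happens; I would set it up as $\frac{2}{m}\sum_{i=1}^{m/2}\big(\text{cost of the bottom spike traveling into the top half}\big) + \big(\text{cost within the top half}\big)$ and carefully handle the floor/parity conditions that require $4 \mid m$. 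All divisibility hypotheses in the statement exist precisely to make these sums close-form without floor functions, so I would keep track of exactly where $4 \mid m$ (as opposed to merely $2 \mid m$) is used — it enters in the antagonism-related sums where $m/2$ itself must be even for a clean split.
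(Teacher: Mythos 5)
Your plan follows essentially the same route as the paper's proof: for each pair, fix the natural column matching (identity, or any matching when one matrix's columns are all equivalent, e.g.\ via the palindrome symmetry of $\AN_m$'s columns used for $\POS(\AN_m,\ST_m)$), justify its optimality, and evaluate the resulting explicit $\EMD$ sums, with the $\AN_m$--$\ST_m$ case split at $i=\nicefrac{m}{4}$ being where $4\mid m$ is genuinely needed. One bookkeeping slip to fix: as written, $\frac{2}{m}\sum_{i=1}^{m/2}\sum_{j=1}^{m/2}|i-j|$ accounts for only one of the two halves and equals $\frac{1}{3}\bigl(\frac{m^2}{4}-1\bigr)$, so $\POS(\ID_m,\ST_m)$ is twice that quantity (the paper gets this directly as $2\cdot\POS(\ID_{m/2},\UN_{m/2})$).
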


\noindent
To normalize these distances, we divide them by
$D(m) = \POS(\ID_m, \UN_m)$, which we suspect to be the largest
positionwise distance between two matrices from $\calF(m)$ (as argued
below).  For each two matrices $X$ and $Y$ among our four, we let
$d(X,Y) := \lim_{m \rightarrow
  \infty}\nicefrac{POS(X_m,Y_m)}{D(m)}$. A simple computation shows
the following (see also the drawing on the right; we sometimes omit
the subscript $m$ for simplicity):

\begin{minipage}[b]{0.45\textwidth}
  \centering
  \begin{align*}
  &d(\ID,\UN) = 1,\\
  &d(\ID,\AN) = d(\UN, \ST) = \nicefrac{3}{4},\\
  &d(\AN,\ST) = \nicefrac{13}{16},\\
  &d(\ID,\ST) = d(\UN,\AN) = \nicefrac{1}{2}.\\
\end{align*}
\end{minipage}
\begin{minipage}[b]{0.5\textwidth}
  \centering
        \newcommand{\drawun}[2]{
    \draw (#1+0.5, #2+1) node[anchor=south] {UN};
    \fill[black!25!white] (#1+0,#2+0)  rectangle (#1+1,#2+1);
    \draw (#1+0,#2+0) rectangle (#1+1,#2+1);
  }

  \newcommand{\drawan}[2]{
    \draw (#1+1.75, #2+0.5) node[anchor=south] {AN};
    \fill[black!25!white] (#1+0,#2+0)  -- (#1+0.2, #2+0) -- (#1+1, #2+1-0.2) -- (#1+1, #2+1) -- (#1+1-0.2, #2+1) -- (#1, #2+0.2) -- cycle;
    \fill[black!25!white] (#1+0,#2+1)  -- (#1+0.2, #2+1) -- (#1+1, #2+0.2) -- (#1+1, #2) -- (#1+1-0.2, #2) -- (#1, #2+1-0.2) -- cycle;
    \draw (#1+0,#2+0) rectangle (#1+1,#2+1);
  }

  \newcommand{\drawid}[2]{
    \draw (#1+0.5, #2+1) node[anchor=south] {ID};
    \fill[black!25!white] (#1+0,#2+1)  -- (#1+0.2, #2+1) -- (#1+1, #2+0.2) -- (#1+1, #2) -- (#1+1-0.2, #2) -- (#1, #2+1-0.2) -- cycle;
    \draw (#1+0,#2+0) rectangle (#1+1,#2+1);
  }

  \newcommand{\drawst}[2]{
    \draw (#1-0.75, #2-0.5) node[anchor=south] {ST};
    \fill[black!25!white] (#1+0,#2+1)  rectangle (#1+0.5, #2+0.5);
    \fill[black!25!white] (#1+0.5,#2+0.5)  rectangle (#1+1, #2+0);
    \draw (#1+0,#2+0) rectangle (#1+1,#2+1);
  }

    \begin{tikzpicture}[xscale=0.5, yscale=0.5]
    \clip (-0.1, -3) rectangle (9, 3.5);
    \drawun{0}{0}
    \drawid{8}{0}
    \drawan{3}{2}
    \drawst{5}{-2}
    \draw (1,0.5) -- (8,0.5);
    \draw (3,0.5) node[anchor=south] {$1$};
    \draw (1,1) -- (3,2.5);
    \draw (2,1.75) node[anchor=south] {$\frac{1}{2}$};
    \draw (4,2.5) -- (8,1);
    \draw (6,1.75) node[anchor=south] {$\frac{3}{4}$};
    \draw (4,2) -- (5,-1);
    \draw (4.7,0.75) node[anchor=south] {$\frac{13}{16}$};
    \draw (1,0) -- (5,-1.5);
    \draw (2,-0.5) node[anchor=north] {$\frac{3}{4}$};
    \draw (6,-1.5) -- (8,0);
    \draw (7.2,-0.5) node[anchor=north] {$\frac{1}{2}$};
  \end{tikzpicture}
\end{minipage}

For small~$m$, using ILPs, 
we verified that each compass matrix is almost as far away as possible
from the others.  Further, we believe that $\ID$ and $\UN$ are the two
most distant frequency matrices, i.e., they form the diameter of our
space. While showing this formally seems to be challenging, for each
$m \in \{3, \ldots, 7\}$, using an ILP, we have verified that, indeed,~$\ID_m$ and $\UN_m$ are the two most distant matrices under the
positionwise distance.

\subsection{Paths between Election Matrices}

Next, we consider convex combinations of frequency matrices.
Given two such matrices,~$X$ and $Y$, and $\alpha \in [0,1]$, one
might expect that matrix $Z = \alpha X + (1-\alpha)Y$ would lie at
distance~$(1-\alpha) \POS(X,Y)$ from $X$ and at distance
$\alpha \POS(X,Y)$ from~$Y$, so that we would~have:
\begin{align*}
     \POS(X,Y) = \POS(X, Z) 
                      + \POS(Z, Y).
\end{align*}
However, without further assumptions this is not necessarily the
case. Indeed, if we take~$X = \ID_m$ and $Y = \textrm{rID}_m$, then
$0.5X+0.5Y = \AN_m$ and $\POS(X,Y) = 0$, but
$\POS(X,0.5X+0.5Y) = \POS(\ID,\AN) > 0$.
Yet, if we arrange the two 
matrices $X$ and $Y$
so that their positionwise distance is achieved by the identity
permutation of their column vectors, then their convex combination lies
exactly between them. 
\begin{restatable}[\appsymb]{proposition}{paths}\label{pro:paths}
     Let $X = (x_1, \ldots, x_m)$ and $Y = (y_1, \ldots y_m)$ be two
  $m \times m$ frequency matrices such that
  $
     \POS(X,Y) = \textstyle \sum_{i=1}^m \EMD(x_i,y_i).
  $
  Then, for each $\alpha \in [0,1]$ it holds that~$\POS(X,Y) = \POS(X, \alpha X + (1-\alpha)Y) + \POS( \alpha X +
  (1-\alpha)Y, Y)$.
\end{restatable}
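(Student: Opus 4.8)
The plan is to let $Z := \alpha X + (1-\alpha)Y$, with columns $z_i = \alpha x_i + (1-\alpha)y_i$, observe that $Z$ is again a frequency matrix (a convex combination of bistochastic matrices is bistochastic), and then squeeze $\POS(X,Y)$ between $\POS(X,Z)+\POS(Z,Y)$ from below and from above. The lower bound is the triangle inequality for $\POS$, and the upper bound comes from the convexity of $\EMD$ together with the hypothesis that the identity permutation already realizes $\POS(X,Y)$.

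For the lower bound, pick permutations $\sigma,\tau\in S_m$ with $\POS(X,Z)=\sum_{i}\EMD(x_i,z_{\sigma(i)})$ and $\POS(Z,Y)=\sum_{j}\EMD(z_j,y_{\tau(j)})$, use the column matching $\tau\circ\sigma$ for the pair $(X,Y)$, and apply the triangle inequality for $\EMD$ (which is a genuine metric on distributions):
\begin{align*}
  \POS(X,Y) &\le \sum_{i=1}^m \EMD\bigl(x_i, y_{\tau(\sigma(i))}\bigr) \\
  &\le \sum_{i=1}^m \EMD(x_i,z_{\sigma(i)}) + \sum_{i=1}^m \EMD\bigl(z_{\sigma(i)}, y_{\tau(\sigma(i))}\bigr) = \POS(X,Z)+\POS(Z,Y),
\end{align*}
where the last step reindexes the second sum by $j=\sigma(i)$.

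For the upper bound, fix $i$. Since $\EMD$ is an optimal-transport cost it is convex in its two arguments, so $\EMD(x_i,z_i)=\EMD(x_i,\alpha x_i+(1-\alpha)y_i)\le \alpha\,\EMD(x_i,x_i)+(1-\alpha)\,\EMD(x_i,y_i)=(1-\alpha)\,\EMD(x_i,y_i)$, and symmetrically $\EMD(z_i,y_i)\le \alpha\,\EMD(x_i,y_i)$; adding these gives $\EMD(x_i,z_i)+\EMD(z_i,y_i)\le \EMD(x_i,y_i)$. (Equivalently, one may note that the one-dimensional $\EMD$ is the $L_1$-distance of cumulative vectors, which makes $\EMD$ linear along the segment from $x_i$ to $y_i$, so these bounds are in fact equalities; only $\le$ is needed.) Now use the identity permutation — not necessarily optimal — as the column matching for both $(X,Z)$ and $(Z,Y)$, and then invoke the hypothesis:
\[
  \POS(X,Z)+\POS(Z,Y) \le \sum_{i=1}^m \EMD(x_i,z_i) + \sum_{i=1}^m \EMD(z_i,y_i) \le \sum_{i=1}^m \EMD(x_i,y_i) = \POS(X,Y).
\]
Combining the two bounds yields the claimed identity for every $\alpha\in[0,1]$.

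I do not expect a real obstacle here; the one point that needs care is that the upper-bound argument must use precisely the column alignment in which the hypothesis is stated (the identity), since for an arbitrary alignment the termwise estimate $\EMD(x_i,z_i)+\EMD(z_i,y_i)\le\EMD(x_i,y_i)$ would be compared against the wrong pairing of columns and the telescoping to $\POS(X,Y)$ would fail — this is exactly the $\ID$ versus $\rID$ degeneracy flagged just before the proposition, and it is where the assumption $\POS(X,Y)=\sum_i\EMD(x_i,y_i)$ is essential.
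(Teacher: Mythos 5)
Your proof is correct and follows essentially the same route as the paper's: squeeze $\POS(X,Y)$ between the triangle inequality (lower bound) and a termwise bound under the identity column matching combined with the hypothesis (upper bound). The only cosmetic difference is that you justify $\EMD(x_i,z_i)\le(1-\alpha)\EMD(x_i,y_i)$ via convexity of $\EMD$, whereas the paper derives it as an equality from translation-invariance ($\EMD(a+b,a+c)=\EMD(b,c)$) and positive homogeneity of $\EMD$ — both are valid, and your parenthetical remark about linearity along the segment recovers the paper's equality anyway.
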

Using  Proposition~\ref{pro:paths}, for each two compass matrices,
we can generate a sequence of matrices that form a path between them.
For example, matrix $0.5\ID + 0.5\UN$ is exactly at the same distance
from $\ID$ and from $\UN$.  In Figure~\ref{fig:10x100SC} we show a map
of elections that (in addition to the dataset of
\citet{szu-fal-sko-sli-tal:c:map}) contains our four compass matrices
and for each two of them, i.e., for each two
$X, Y \in \{\ID, \UN, \AN, \ST\}$, a set of
$\lceil 50\cdot d(X,Y) \rceil$ matrices obtained as their convex
combinations with values of~$\alpha$ uniformly distributed in $[0,1]$.
Note that by the proof of Proposition~\ref{pr:calc}, it holds that the positionwise
distance between any two of our four matrices is achieved for the
identity permuation, as required by Proposition~\ref{pro:paths}.

\section{Applying the Compass}\label{se:appcomp}
In this section, we apply our compass to gain a better understanding
of the map of elections created by \citet{szu-fal-sko-sli-tal:c:map}
and to place some real-life elections on the map.  We also determine
where Mallows and urn elections land.

\subsection{A Map of Statistical Cultures with a Compass}

In Figure~\ref{fig:10x100SC}, we show a map of the 800 elections
provided by \citet{szu-fal-sko-sli-tal:c:map} in their 10x100 dataset,
together with the compass.
As expected, the uniformity matrix is close to the impartial culture
elections, but still at some distance from them.
Similarly, the identity matrix is very close to the Mallows
elections with close-to-zero values of~$\phi$. Indeed, such elections
consist of nearly identical votes.

The red path, linking $\AN$ and $\ST$, roughly partitions the
elections into those closer to~$\UN$ and those closer to $\ID$. The
latter group consists mostly of Mallows and urn elections (with low
$\phi$ or high $\alpha$, respectively), but single-crossing and some
single-peaked elections also make an appearance.

Analyzing the distances of elections to $\AN$ and $\ST$, it is
striking that 1D Interval elections lie closer to $\AN$, while other
hypercube elections lie closer to $\ST$, even though, formally, they
are similar.\footnote{Elections in a $t$-dimensional hypercube model
  are generated by drawing, for each candidate and each voter, a point
  from $[0,1]^t$ uniformly at random.  A voter then ranks the candidates with respect to
  the increasing distance of their points from his or her. For
  $t \in \{1,2,3\}$, $t$-dimensional hypercube elections are called 1D
  Interval, 2D Square, and 3D Cube, respectively. The others are called
  $t$D H-Cube.} Moreover, it is intriguing that single-peaked
elections generated according to the Walsh
model~\citep{wal:t:generate-sp} are closer to $\ST$, whereas those
from the Conitzer model~\citep{con:j:eliciting-singlepeaked} (which
are very similar to the $1$D Interval ones) are closer to $\AN$.  To
understand this phenomenon, let us look at single-peakedness and the
models of Conitzer and Walsh more closely.

\begin{definition}
  Consider a set $C = \{c_1, \ldots, c_m\}$ of candidates and a linear
  order $c_1 \lhd c_2 \lhd \cdots \lhd c_m$, referred to as the
  \emph{societal axis}. A vote $v$ is single-peaked with respect to
  $\lhd$ if for each $t \in [m]$ it holds that the $t$ top-ranked
  candidates in $v$ form an interval in $\lhd$. An election
  $E = (C,V)$ is single-peaked if there is a societal axis $\lhd$ such
  that each vote in $V$ is single-peaked with respect to $\lhd$.
\end{definition}

Intuitively, the societal axis orders the candidates with respect to
some common, one-dimensional issue, such as, e.g., their position on the
political left-to-right spectrum.
In both the Conitzer and the Walsh model, we start by choosing the
axis uniformly at random.
Then, in the Conitzer model,
we generate each vote as follows: We choose the top-ranked candidate
uniformly at random and we keep on extending the vote with candidates
to the left and to the right of the already-selected ones, deciding
which one to pick with a coin toss, until the vote is complete.
Thus, by choosing close-to-extreme candidates from different sides of
the axis as top-ranked, we generate close-to-opposite preference
orders with fairly high probability. As a consequence, the Conitzer
model generates elections that have common features with the
antagonism ones.
Under the Walsh model, we choose each single-peaked preference order
uniformly at random.  There are few such preference orders with
extreme candidates (with respect to the axis) ranked highly, but many
with the center candidates on top and the extreme candidates ranked around the bottom. This leads to 
stratification.

\begin{figure}
  \centering
  \begin{subfigure}[t]{4.15cm}
    \centering
    \includegraphics[width=3.9cm]{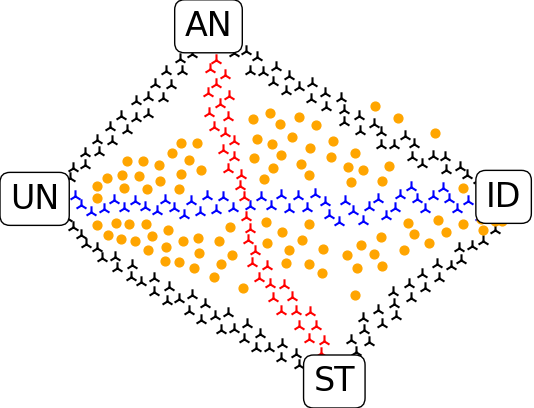}
    \caption{\label{fig:urns} Urn elections (orange); $\alpha$~follows the Gamma distribution.}
  \end{subfigure}
  \quad
  \begin{subfigure}[t]{4.15cm}
    \centering
    \includegraphics[width=3.9cm]{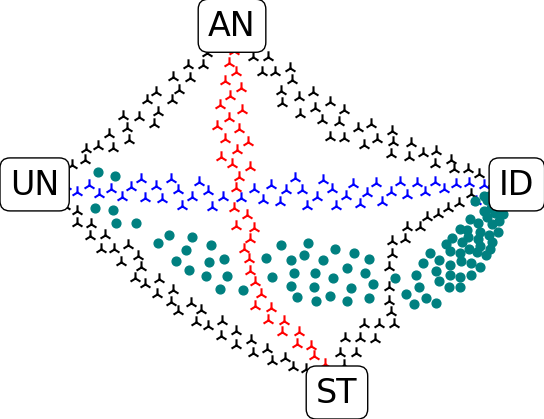}
    \caption{\label{fig:mallows-phi} Mallows elections (teal); $\phi$ follows the uniform distribution.}
  \end{subfigure}
  \quad
  \begin{subfigure}[t]{4.15cm}
    \centering
    \includegraphics[width=3.9cm]{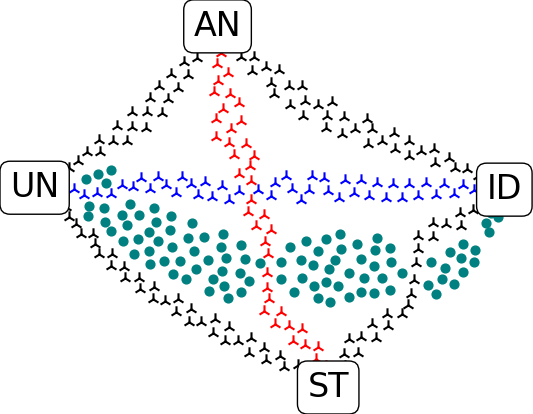}
    \caption{\label{fig:mallows-rel} Mallows elections (teal); $\relphi$ follows the uniform distribution.}
  \end{subfigure}
  \quad

  \caption{\label{fig:mallows-urns}
    Maps showing the four compass
    matrices, their connecting paths, and, respectively, urn
    elections and Mallows elections (for two distributions of their parameter).
    These visualizations are for 20 candidates and 100 voters.}
\end{figure}

\subsection{Urn and Mallows Elections}

Our next goal is to place ``paths'' of urn and Mallows elections on
the map. In both cases it requires some care. Recall that the urn
model has parameter $\alpha$, which takes values between~$0$ and $\infty$. To generate an urn election, we choose
$\alpha$ according to the Gamma distribution with shape parameter
$k=0.8$ and scale parameter $\theta=1$ (this ensures that about half of
the urn elections are closer to $\UN$ than to $\ID$, and the opposite
for the other half; see Figure~\ref{fig:urns}).

Regarding the Mallows model, 
we have a parameter $\phi$ that takes values between $0$ and~$1$,
where $0$ leads to generating $\ID$ elections and $1$ leads to
generating IC ones.  It is thus intuitive to choose $\phi$ uniformly
at random from the $[0,1]$ interval.
Yet, as seen in
\Cref{fig:mallows-phi}, doing so places elections quite unevenly on
the map.
Similarly, for different numbers of candidates the same value of
$\phi$ leads to choosing elections at different distances from $\ID$, which thereby also end up in (quite) different places on the map
(see the top-left part of Figure~\ref{fig:mallows-issues}).
This imbalance, in essence, follows from the fact that making, say, five swaps among 10 candidates has a stronger effect than making five swaps among 20 candidates.
\smallskip

\noindent\textbf{Normalizing Mallows.}
Let us consider a setting with $m$ candidates. For a $\phi \in [0,1]$,
let $\expswaps(m,\phi)$ be the expected swap distance between an
$m$-candidate vote generated using the Mallows model with parameter
$\phi$ and the center vote.  We define the relative expected number of
swaps~as:
\[
  \textstyle \relswaps(m,\phi) = \frac{\expswaps(m,\phi)}{\nicefrac{m(m-1)}{2}}
\]
(see the bottom-left part of Figure~\ref{fig:mallows-issues} for plots
of this value, depending on $\phi$ and $m$).  
In our approach, we choose a value $\relphi \in [0,1]$ as a
parameter, 
find a $\phi$ such that
$\relswaps(m,\phi) = \relphi$, and 
draw an election from the
Mallows model using this $\phi$
(see Appendix~C.1 for details).
Working on $\relphi$ instead of
$\phi$ not only allows for an intuitive and natural interpretation of
the parameter as the relative expected number of swaps in each vote (or the normalized distance from $\ID$), but also for obtaining comparable elections for
different numbers of candidates.
In Figures~\ref{fig:mallows-phi} and~\ref{fig:mallows-rel}, we
visualize Mallows elections generated with $\phi \in [0,1]$ and
rel-$\phi \in [0,0.5]$ chosen uniformly at random, respectively (we
use~$\relphi \leq 0.5$ because for larger values one obtains
analogous elections, but reversed; e.g., both~$\relphi = 0$ and
$\relphi = 1$ lead to identity elections). The latter figure shows a
far more balanced distribution of points.

As computing the $\phi$ values based on $\relphi$ and the number of
candidates requires some effort, we provide sample mappings in the
table on the right side of Figure~\ref{fig:mallows-issues}.\smallskip

\definecolor{color0}{rgb}{0.12156862745098,0.466666666666667,0.705882352941177}
\definecolor{color1}{rgb}{1,0.498039215686275,0.0549019607843137}
\definecolor{color2}{rgb}{0.172549019607843,0.627450980392157,0.172549019607843}
\definecolor{color3}{rgb}{0.83921568627451,0.152941176470588,0.156862745098039}
\definecolor{color4}{rgb}{0.580392156862745,0.403921568627451,0.741176470588235}

\begin{figure}
  \centering
  \begin{subfigure}[t]{6.cm}
    \centering 
\begin{tikzpicture}[scale=0.5]

\begin{axis}[
legend cell align={left},
legend style={fill opacity=0.8, draw opacity=1, text opacity=1, at={(0.03,0.97)}, anchor=north west, draw=white!80!black, font=\LARGE},
tick align=outside,
tick pos=left,
x grid style={white!69.0196078431373!black},
xlabel={$\phi$ parameter},
label style={font=\LARGE},
xmin=0, xmax=1,
xtick style={color=black},
xtick={0,0.2,0.4,0.6,0.8,1},
xticklabels={0.0,0.2,0.4,0.6,0.8,1.0},
y grid style={white!69.0196078431373!black},
ylabel={Normalized distance from ID},
ymin=0, ymax=1,
ytick style={color=black},
ytick={0,0.2,0.4,0.6,0.8,1},
yticklabels={0.0,0.2,0.4,0.6,0.8,1.0}
]
\addplot [semithick, color0] table {%
0 0
0.01 0.010175
0.02 0.020475
0.03 0.03065
0.04 0.041
0.05 0.051175
0.06 0.0623
0.07 0.072675
0.08 0.080625
0.09 0.0928
0.1 0.103675
0.11 0.11525
0.12 0.129375
0.13 0.13695
0.14 0.15035
0.15 0.15945
0.16 0.16975
0.17 0.1805
0.18 0.196325
0.19 0.206825
0.2 0.210725
0.21 0.23265
0.22 0.23515
0.23 0.24735
0.24 0.260625
0.25 0.2729
0.26 0.285875
0.27 0.295175
0.28 0.307875
0.29 0.318775
0.3 0.333675
0.31 0.3447
0.32 0.351425
0.33 0.3679
0.34 0.372725
0.35 0.39345
0.36 0.40355
0.37 0.4094
0.38 0.424175
0.39 0.433175
0.4 0.4442
0.41 0.454825
0.42 0.469275
0.43 0.477625
0.44 0.49115
0.45 0.508275
0.46 0.5156
0.47 0.52415
0.48 0.532675
0.49 0.5488
0.5 0.55515
0.51 0.566125
0.52 0.5802
0.53 0.58795
0.54 0.599275
0.55 0.6128
0.56 0.62525
0.57 0.632125
0.58 0.6398
0.59 0.64915
0.6 0.668175
0.61 0.683125
0.62 0.6798
0.63 0.692775
0.64 0.700725
0.65 0.71355
0.66 0.7258
0.67 0.72935
0.68 0.74635
0.69 0.753075
0.7 0.756975
0.71 0.768275
0.72 0.780825
0.73 0.78595
0.74 0.79625
0.75 0.8066
0.76 0.80685
0.77 0.821075
0.78 0.824325
0.79 0.8342
0.8 0.8462
0.81 0.845325
0.82 0.857525
0.83 0.86125
0.84 0.868025
0.85 0.875875
0.86 0.88215
0.87 0.8929
0.88 0.8947
0.89 0.90025
0.9 0.900275
0.91 0.91155
0.92 0.915425
0.93 0.91825
0.94 0.92175
0.95 0.927
0.96 0.927575
0.97 0.93075
0.98 0.9309
0.99 0.93165
1 0.9316
};
\addlegendentry{m=5}
\addplot [semithick, color1]
table {%
0 0
0.01 0.00538787878787879
0.02 0.0109272727272727
0.03 0.0163878787878788
0.04 0.0222545454545455
0.05 0.027769696969697
0.06 0.0341272727272727
0.07 0.0410060606060606
0.08 0.0461212121212121
0.09 0.0542727272727273
0.1 0.058369696969697
0.11 0.0652121212121212
0.12 0.0705393939393939
0.13 0.0774424242424242
0.14 0.0844424242424242
0.15 0.091260606060606
0.16 0.0985575757575758
0.17 0.103527272727273
0.18 0.111321212121212
0.19 0.118909090909091
0.2 0.125551515151515
0.21 0.132987878787879
0.22 0.140533333333333
0.23 0.146593939393939
0.24 0.155672727272727
0.25 0.162454545454545
0.26 0.169945454545455
0.27 0.177806060606061
0.28 0.184327272727273
0.29 0.192369696969697
0.3 0.200030303030303
0.31 0.211424242424242
0.32 0.218024242424242
0.33 0.225527272727273
0.34 0.234593939393939
0.35 0.239890909090909
0.36 0.252527272727273
0.37 0.258478787878788
0.38 0.267030303030303
0.39 0.278715151515152
0.4 0.285218181818182
0.41 0.298563636363636
0.42 0.306521212121212
0.43 0.316557575757576
0.44 0.321854545454546
0.45 0.333854545454545
0.46 0.339709090909091
0.47 0.354727272727273
0.48 0.362587878787879
0.49 0.370763636363636
0.5 0.385115151515152
0.51 0.392048484848485
0.52 0.3992
0.53 0.415248484848485
0.54 0.425490909090909
0.55 0.438593939393939
0.56 0.445048484848485
0.57 0.458660606060606
0.58 0.471569696969697
0.59 0.481012121212121
0.6 0.49310303030303
0.61 0.503187878787879
0.62 0.516139393939394
0.63 0.530163636363636
0.64 0.539624242424242
0.65 0.551260606060606
0.66 0.565115151515152
0.67 0.577472727272727
0.68 0.587575757575758
0.69 0.603127272727273
0.7 0.618521212121212
0.71 0.630460606060606
0.72 0.640618181818182
0.73 0.654266666666667
0.74 0.668812121212121
0.75 0.676866666666667
0.76 0.6956
0.77 0.705248484848485
0.78 0.716121212121212
0.79 0.733163636363636
0.8 0.740848484848485
0.81 0.754981818181818
0.82 0.769909090909091
0.83 0.779242424242424
0.84 0.794593939393939
0.85 0.802624242424242
0.86 0.816369696969697
0.87 0.830157575757576
0.88 0.846042424242424
0.89 0.852048484848485
0.9 0.865812121212121
0.91 0.876036363636364
0.92 0.887551515151515
0.93 0.895945454545454
0.94 0.904381818181818
0.95 0.9104
0.96 0.918842424242424
0.97 0.92329696969697
0.98 0.928787878787878
0.99 0.929230303030303
1 0.930169696969697
};
\addlegendentry{m=10}
\addplot [semithick, color2]
table {%
0 0
0.01 0.00287368421052632
0.02 0.00574436090225564
0.03 0.00864511278195489
0.04 0.0119218045112782
0.05 0.0150932330827068
0.06 0.0181037593984962
0.07 0.0209022556390977
0.08 0.0246616541353383
0.09 0.0278135338345865
0.1 0.0314300751879699
0.11 0.034409022556391
0.12 0.0381022556390977
0.13 0.0416721804511278
0.14 0.0455157894736842
0.15 0.0482511278195489
0.16 0.0525007518796992
0.17 0.0560992481203008
0.18 0.0600195488721805
0.19 0.0636691729323308
0.2 0.0673669172932331
0.21 0.0712075187969925
0.22 0.0756255639097744
0.23 0.079981954887218
0.24 0.0837609022556391
0.25 0.0873924812030075
0.26 0.0916827067669173
0.27 0.0964736842105263
0.28 0.10064962406015
0.29 0.105410526315789
0.3 0.109881203007519
0.31 0.114363909774436
0.32 0.118553383458647
0.33 0.123222556390977
0.34 0.129633082706767
0.35 0.133864661654135
0.36 0.139384962406015
0.37 0.143711278195489
0.38 0.149476691729323
0.39 0.155317293233083
0.4 0.160574436090226
0.41 0.165948872180451
0.42 0.171645112781955
0.43 0.177938345864662
0.44 0.183187969924812
0.45 0.188863157894737
0.46 0.196126315789474
0.47 0.202727819548872
0.48 0.209947368421053
0.49 0.216458646616541
0.5 0.22223007518797
0.51 0.229777443609023
0.52 0.237554887218045
0.53 0.244189473684211
0.54 0.252697744360902
0.55 0.259993984962406
0.56 0.268833082706767
0.57 0.276932330827068
0.58 0.28543007518797
0.59 0.293178947368421
0.6 0.304291729323308
0.61 0.313209022556391
0.62 0.324028571428571
0.63 0.333342857142857
0.64 0.345401503759398
0.65 0.354813533834586
0.66 0.366771428571429
0.67 0.378774436090226
0.68 0.388888721804511
0.69 0.401959398496241
0.7 0.414354887218045
0.71 0.427126315789474
0.72 0.441657142857143
0.73 0.454968421052632
0.74 0.470562406015038
0.75 0.484735338345865
0.76 0.500813533834587
0.77 0.518192481203007
0.78 0.534812030075188
0.79 0.55216992481203
0.8 0.569882706766917
0.81 0.586424060150376
0.82 0.607703759398496
0.83 0.623366917293233
0.84 0.645254135338346
0.85 0.663798496240601
0.86 0.686081203007519
0.87 0.70595037593985
0.88 0.72469022556391
0.89 0.746048120300752
0.9 0.768270676691729
0.91 0.789751879699248
0.92 0.811584962406015
0.93 0.829604511278195
0.94 0.852234586466165
0.95 0.868998496240602
0.96 0.888839097744361
0.97 0.906505263157895
0.98 0.918669172932331
0.99 0.928004511278196
1 0.930783458646617
};
\addlegendentry{m=20}
\addplot [semithick, color3]
table {%
0 0
0.01 0.00115990396158463
0.02 0.00238031212484994
0.03 0.0036343337334934
0.04 0.00486002400960384
0.05 0.006184393757503
0.06 0.00749507803121249
0.07 0.0087968787515006
0.08 0.0102151260504202
0.09 0.0115606242496999
0.1 0.0129193277310924
0.11 0.014303481392557
0.12 0.0157102040816327
0.13 0.0172552220888355
0.14 0.0187663865546218
0.15 0.0201140456182473
0.16 0.0218050420168067
0.17 0.0231951980792317
0.18 0.0248146458583433
0.19 0.0263313325330132
0.2 0.0281171668667467
0.21 0.030041056422569
0.22 0.0314153661464586
0.23 0.0332316926770708
0.24 0.0350453781512605
0.25 0.0368794717887155
0.26 0.0387222088835534
0.27 0.040709243697479
0.28 0.0425037214885954
0.29 0.0443075630252101
0.3 0.046252581032413
0.31 0.0483877551020408
0.32 0.050358943577431
0.33 0.0526
0.34 0.0544893157262905
0.35 0.0566991596638656
0.36 0.058937575030012
0.37 0.0610941176470588
0.38 0.0636509003601441
0.39 0.0661582232893157
0.4 0.068480912364946
0.41 0.0709822328931573
0.42 0.0736917166866747
0.43 0.0761483793517407
0.44 0.0788888355342137
0.45 0.0817226890756303
0.46 0.0844727490996399
0.47 0.0873517406962785
0.48 0.0904686674669868
0.49 0.0935968787515006
0.5 0.0968621848739496
0.51 0.100165906362545
0.52 0.10364849939976
0.53 0.107207202881152
0.54 0.110868427370948
0.55 0.114397118847539
0.56 0.118658823529412
0.57 0.12243481392557
0.58 0.126810324129652
0.59 0.131446338535414
0.6 0.136403841536615
0.61 0.141014885954382
0.62 0.146368067226891
0.63 0.151313805522209
0.64 0.156498919567827
0.65 0.162744057623049
0.66 0.16889843937575
0.67 0.174956542617047
0.68 0.18149699879952
0.69 0.188504201680672
0.7 0.195414165666267
0.71 0.203357262905162
0.72 0.212523649459784
0.73 0.22167274909964
0.74 0.230198799519808
0.75 0.240183673469388
0.76 0.251229771908763
0.77 0.263235774309724
0.78 0.275045618247299
0.79 0.289174069627851
0.8 0.302064825930372
0.81 0.317788475390156
0.82 0.33484393757503
0.83 0.352229531812725
0.84 0.370998799519808
0.85 0.392236974789916
0.86 0.415296518607443
0.87 0.439588475390156
0.88 0.466898199279712
0.89 0.497748619447779
0.9 0.528759423769508
0.91 0.564878511404562
0.92 0.603329171668667
0.93 0.643175990396158
0.94 0.686746218487395
0.95 0.732585354141657
0.96 0.780596878751501
0.97 0.828399039615846
0.98 0.874944057623049
0.99 0.913270828331332
1 0.929997118847539
};
\addlegendentry{m=50}
\addplot [semithick, color4]
table {%
0 0
0.01 0.000592679267926793
0.02 0.00121344134413441
0.03 0.00184368436843684
0.04 0.00247434743474347
0.05 0.00310579057905791
0.06 0.00377803780378038
0.07 0.00445070507050705
0.08 0.00511785178517852
0.09 0.00581362136213621
0.1 0.00652187218721872
0.11 0.00729012901290129
0.12 0.0079979597959796
0.13 0.00874011401140114
0.14 0.00948346834683469
0.15 0.0102541254125413
0.16 0.0110244224422442
0.17 0.0117794779477948
0.18 0.0126163816381638
0.19 0.0134466246624662
0.2 0.0143495349534954
0.21 0.0151075307530753
0.22 0.0160439243924392
0.23 0.016950795079508
0.24 0.0177590759075908
0.25 0.0187059705970597
0.26 0.0196276627662766
0.27 0.020626402640264
0.28 0.0216479447944794
0.29 0.0225363936393639
0.3 0.0235730573057306
0.31 0.0245314131413141
0.32 0.0256975697569757
0.33 0.0266729072907291
0.34 0.0278117611761176
0.35 0.0289525352535253
0.36 0.0301005100510051
0.37 0.0312787878787879
0.38 0.0325251725172517
0.39 0.0337182718271827
0.4 0.0349533153315331
0.41 0.0363224122412241
0.42 0.0375917791779178
0.43 0.0389144914491449
0.44 0.0403288328832883
0.45 0.0416700270027003
0.46 0.0433204320432043
0.47 0.0448558655865587
0.48 0.046379297929793
0.49 0.0479362136213621
0.5 0.0497027902790279
0.51 0.0514538853885389
0.52 0.0532976297629763
0.53 0.0551774377437744
0.54 0.0569743774377438
0.55 0.0590064206420642
0.56 0.0611601560156016
0.57 0.0632694869486949
0.58 0.065637503750375
0.59 0.068029402940294
0.6 0.0704244224422442
0.61 0.073009900990099
0.62 0.0754980498049805
0.63 0.0784837683768377
0.64 0.0816364836483648
0.65 0.0847875187518752
0.66 0.0876471647164716
0.67 0.0914298229822983
0.68 0.0950934293429343
0.69 0.0991521152115212
0.7 0.103062406240624
0.71 0.107641464146415
0.72 0.11209702970297
0.73 0.117270687068707
0.74 0.122414701470147
0.75 0.128199759975998
0.76 0.134408700870087
0.77 0.141050165016502
0.78 0.147962076207621
0.79 0.15608796879688
0.8 0.164411461146115
0.81 0.173604260426043
0.82 0.183998859885989
0.83 0.194975757575758
0.84 0.207641644164416
0.85 0.221317671767177
0.86 0.236731353135314
0.87 0.25506798679868
0.88 0.274460486048605
0.89 0.297298769876988
0.9 0.3241099909991
0.91 0.35437101710171
0.92 0.390027782778278
0.93 0.432061446144614
0.94 0.482864326432643
0.95 0.541804200420042
0.96 0.613138133813381
0.97 0.692549534953495
0.98 0.784302550255026
0.99 0.874736453645365
1 0.929670327032703
};
\addlegendentry{m=100}
\end{axis}

\end{tikzpicture}

\begin{tikzpicture}[scale=0.5]

\definecolor{color0}{rgb}{0.12156862745098,0.466666666666667,0.705882352941177}
\definecolor{color1}{rgb}{1,0.498039215686275,0.0549019607843137}
\definecolor{color2}{rgb}{0.172549019607843,0.627450980392157,0.172549019607843}
\definecolor{color3}{rgb}{0.83921568627451,0.152941176470588,0.156862745098039}
\definecolor{color4}{rgb}{0.580392156862745,0.403921568627451,0.741176470588235}

\begin{axis}[
legend cell align={left},
legend style={fill opacity=0.8, draw opacity=1, text opacity=1, at={(0.03,0.97)}, anchor=north west, draw=white!80!black, font=\LARGE},
tick align=outside,
tick pos=left,
x grid style={white!69.0196078431373!black},
xlabel={$\phi$ parameter},
label style={font=\LARGE},
xmin=0, xmax=1,
xtick style={color=black},
xtick={0,0.2,0.4,0.6,0.8,1},
xticklabels={0.0,0.2,0.4,0.6,0.8,1.0},
y grid style={white!69.0196078431373!black},
ylabel={Exp. rel. number of swaps},
ymin=0, ymax=0.5,
ytick style={color=black},
ytick={0,0.1,0.2,0.3,0.4,0.5},
yticklabels={0,0.1,0.2,0.3,0.4,0.5}
]
\addplot [semithick, color0]
table {%
0 0
0.01 0.00402009798990398
0.02 0.00808076767410693
0.03 0.0121825355057186
0.04 0.0163258774146485
0.05 0.0205112174923554
0.06 0.0247389266613915
0.07 0.0290093213472365
0.08 0.0333226621691757
0.09 0.0376791526662478
0.1 0.0420789380735416
0.11 0.0465221041633761
0.12 0.0510086761651335
0.13 0.0555386177767342
0.14 0.060111830279943
0.15 0.0647281517708735
0.16 0.0693873565162099
0.17 0.0740891544447898
0.18 0.0788331907832966
0.19 0.083619045843876
0.2 0.0884462349705421
0.21 0.0933142086502556
0.22 0.0982223527935529
0.23 0.103169989188578
0.24 0.108156376131326
0.25 0.11318070923384
0.26 0.118242122411048
0.27 0.123339689045826
0.28 0.128472423330817
0.29 0.133639281784468
0.3 0.138839164937668
0.31 0.144070919186362
0.32 0.149333338804483
0.33 0.154625168110605
0.34 0.159945103780752
0.35 0.165291797298943
0.36 0.170663857536241
0.37 0.176059853448294
0.38 0.181478316880689
0.39 0.186917745470837
0.4 0.192376605634574
0.41 0.197853335625224
0.42 0.20334634865254
0.43 0.208854036048676
0.44 0.214374770468169
0.45 0.219906909108891
0.46 0.225448796940903
0.47 0.230998769930315
0.48 0.236555158245447
0.49 0.242116289432892
0.5 0.247680491551459
0.51 0.253246096252448
0.52 0.25881144179522
0.53 0.264374875987644
0.54 0.269934759041632
0.55 0.27548946633469
0.56 0.281037391069159
0.57 0.286576946821585
0.58 0.292106569975475
0.59 0.297624722031518
0.6 0.303129891790165
0.61 0.308620597402325
0.62 0.314095388284738
0.63 0.319552846897423
0.64 0.324991590381387
0.65 0.330410272055572
0.66 0.335807582772746
0.67 0.341182252134769
0.68 0.346533049568352
0.69 0.351858785263027
0.7 0.357158310973682
0.71 0.362430520690515
0.72 0.36767435117979
0.73 0.372888782399205
0.74 0.378072837792094
0.75 0.383225584465021
0.76 0.388346133253638
0.77 0.393433638681913
0.78 0.398487298820056
0.79 0.403506355046635
0.8 0.408490091720465
0.81 0.413437835767966
0.82 0.418348956191706
0.83 0.423222863505843
0.84 0.428059009104181
0.85 0.432856884566447
0.86 0.437616020908371
0.87 0.442335987780982
0.88 0.447016392624432
0.89 0.451656879781507
0.9 0.456257129575803
0.91 0.460816857359364
0.92 0.465335812534391
0.93 0.469813777553425
0.94 0.474250566902184
0.95 0.478646026069038
0.96 0.483000030504858
0.97 0.487312484576778
0.98 0.491583320519178
0.99 0.495812497384961
1 0.5
};
\addlegendentry{m=5}
\addplot [semithick, color1]
table {%
0 0
0.01 0.00201568956440043
0.02 0.00406329983720956
0.03 0.00614365612849768
0.04 0.00825760181560101
0.05 0.0104059993357909
0.06 0.0125897311513812
0.07 0.0148097006877446
0.08 0.0170668332441166
0.09 0.0193620768764354
0.1 0.0216964032507756
0.11 0.0240708084652045
0.12 0.0264863138370984
0.13 0.0289439666521165
0.14 0.0314448408701331
0.15 0.0339900377824827
0.16 0.0365806866138667
0.17 0.0392179450612227
0.18 0.0419029997607557
0.19 0.0446370666731924
0.2 0.0474213913761347
0.21 0.0502572492511862
0.22 0.0531459455522852
0.23 0.0560888153404374
0.24 0.0590872232687892
0.25 0.0621425632007489
0.26 0.0652562576426508
0.27 0.06842975697129
0.28 0.0716645384355529
0.29 0.0749621049103461
0.3 0.0783239833801193
0.31 0.0817517231284972
0.32 0.085246893609931
0.33 0.0888110819788644
0.34 0.0924458902517252
0.35 0.0961529320771463
0.36 0.0999338290902118
0.37 0.103790206827268
0.38 0.107723690178983
0.39 0.111735898360894
0.4 0.115828439382764
0.41 0.120002904000613
0.42 0.124260859138447
0.43 0.128603840770461
0.44 0.133033346258857
0.45 0.137550826147475
0.46 0.142157675417184
0.47 0.146855224215403
0.48 0.151644728079249
0.49 0.156527357679635
0.5 0.161504188122074
0.51 0.166576187849015
0.52 0.171744207198081
0.53 0.177008966680576
0.54 0.182371045054906
0.55 0.187830867279968
0.56 0.193388692443983
0.57 0.199044601774373
0.58 0.204798486843999
0.59 0.210650038098044
0.6 0.216598733833802
0.61 0.22264382977238
0.62 0.228784349366465
0.63 0.235019074991607
0.64 0.241346540169671
0.65 0.247765022971894
0.66 0.25427254074517
0.67 0.260866846298567
0.68 0.267545425677557
0.69 0.274305497640917
0.7 0.281144014939784
0.71 0.288057667479991
0.72 0.295042887427793
0.73 0.302095856295639
0.74 0.309212514019215
0.75 0.316388570009917
0.76 0.323619516138841
0.77 0.330900641579827
0.78 0.338227049410753
0.79 0.345593674844764
0.8 0.352995304937153
0.81 0.360426599589796
0.82 0.367882113654076
0.83 0.375356319915509
0.84 0.382843632729456
0.85 0.390338432067552
0.86 0.397835087729181
0.87 0.405327983471493
0.88 0.412811540815139
0.89 0.420280242290917
0.9 0.427728653904645
0.91 0.43515144661337
0.92 0.442543416625091
0.93 0.44989950435594
0.94 0.45721481190261
0.95 0.464484618913168
0.96 0.4717043967656
0.97 0.478869820989807
0.98 0.485976781894886
0.99 0.493021393388628
1 0.5
};
\addlegendentry{m=10}
\addplot [semithick, color2]
table {%
0 0
0.01 0.00100903227056906
0.02 0.0020364743438289
0.03 0.00308285724367295
0.04 0.00414873025457214
0.05 0.00523466189254851
0.06 0.00634124091659029
0.07 0.00746907738415349
0.08 0.00861880375455895
0.09 0.0097910760442753
0.1 0.0109865750382766
0.11 0.012206007561881
0.12 0.0134501078177133
0.13 0.0147196387926933
0.14 0.0160153937402308
0.15 0.017338197743112
0.16 0.0186889093628927
0.17 0.0200684223819634
0.18 0.0214776676448387
0.19 0.0229176150056277
0.2 0.0243892753890848
0.21 0.025893702973107
0.22 0.0274319975010432
0.23 0.0290053067327111
0.24 0.0306148290435741
0.25 0.0322618161821221
0.26 0.0339475761961107
0.27 0.035673476538954
0.28 0.0374409473682216
0.29 0.0392514850488601
0.3 0.0411066558744337
0.31 0.0430081000203428
0.32 0.0449575357436224
0.33 0.0469567638445275
0.34 0.0490076724056425
0.35 0.0511122418246955
0.36 0.0532725501575603
0.37 0.0554907787880472
0.38 0.0577692184409665
0.39 0.0601102755545103
0.4 0.0625164790271637
0.41 0.0649904873530103
0.42 0.0675350961573225
0.43 0.0701532461415573
0.44 0.0728480314431533
0.45 0.0756227084106037
0.46 0.0784807047879538
0.47 0.0814256292948017
0.48 0.0844612815777811
0.49 0.087591662496958
0.5 0.0908209846951682
0.51 0.0941536833795621
0.52 0.0975944272219852
0.53 0.101148129257703
0.54 0.104819957629782
0.55 0.108615345988478
0.56 0.112540003310647
0.57 0.116599922852747
0.58 0.120801389892038
0.59 0.125150987843388
0.6 0.129655602263653
0.61 0.134322422171684
0.62 0.139158938020138
0.63 0.14417293555631
0.64 0.149372484704775
0.65 0.15476592249721
0.66 0.160361828968094
0.67 0.16616899483392
0.68 0.172196379684832
0.69 0.178453059349578
0.7 0.184948161057945
0.71 0.191690785031973
0.72 0.198689911203004
0.73 0.205954289892666
0.74 0.213492315529947
0.75 0.221311882821939
0.76 0.229420225269705
0.77 0.237823736536947
0.78 0.246527775946411
0.79 0.255536460296588
0.8 0.264852445247102
0.81 0.274476700686811
0.82 0.284408285727286
0.83 0.294644130188175
0.84 0.305178830571289
0.85 0.316004469449835
0.86 0.327110467808153
0.87 0.338483480032794
0.88 0.350107340865335
0.89 0.361963072594969
0.9 0.374028959052202
0.91 0.386280690580585
0.92 0.398691581197698
0.93 0.411232855768969
0.94 0.423874001433385
0.95 0.436583174011794
0.96 0.449327646989364
0.97 0.462074288173124
0.98 0.47479004751344
0.99 0.487442438995138
1 0.5
};
\addlegendentry{m=20}
\addplot [semithick, color3]
table {%
0 0
0.01 0.000403874640357897
0.02 0.000815653080234022
0.03 0.00123557381864837
0.04 0.00166388469254588
0.05 0.00210084336980022
0.06 0.00254671787034566
0.07 0.00300178711764751
0.08 0.00346634152288989
0.09 0.00394068360444512
0.1 0.00442512864539256
0.11 0.00492000539207895
0.12 0.00542565679695887
0.13 0.00594244080922562
0.14 0.00647073121704196
0.15 0.00701091854551029
0.16 0.00756341101488598
0.17 0.00812863556393964
0.18 0.00870703894381868
0.19 0.00929908888825002
0.2 0.00990527536647025
0.21 0.0105261119258726
0.22 0.0111621371320296
0.23 0.0118139161144922
0.24 0.012482042227594
0.25 0.0131671388364055
0.26 0.0138698612390095
0.27 0.0145908987374095
0.28 0.0153309768706578
0.29 0.0160908598252157
0.3 0.0168713530391551
0.31 0.0176733060185983
0.32 0.0184976153868008
0.33 0.0193452281885405
0.34 0.0202171454750156
0.35 0.0211144261973167
0.36 0.022038191439774
0.37 0.0229896290281317
0.38 0.0239699985516444
0.39 0.0249806368428818
0.4 0.0260229639643614
0.41 0.0270984897571978
0.42 0.0282088210138713
0.43 0.0293556693451175
0.44 0.0305408598199703
0.45 0.0317663404683392
0.46 0.0330341927473822
0.47 0.0343466430865947
0.48 0.035706075642276
0.49 0.0371150464101985
0.5 0.0385762988663188
0.51 0.0400927813297096
0.52 0.0416676662701533
0.53 0.0433043718157085
0.54 0.0450065857538598
0.55 0.0467782923645816
0.56 0.0486238024759392
0.57 0.0505477871941237
0.58 0.0525553158317151
0.59 0.0546518986424549
0.6 0.0568435350702395
0.61 0.059136768337144
0.62 0.061538747333319
0.63 0.0640572969343333
0.64 0.066700998063303
0.65 0.0694792790408818
0.66 0.0724025200313156
0.67 0.075482172702984
0.68 0.0787308975828601
0.69 0.0821627220008546
0.7 0.0857932219947775
0.71 0.0896397320779495
0.72 0.0937215873497566
0.73 0.0980604030315326
0.74 0.102680397090499
0.75 0.10760876209146
0.76 0.112876092651422
0.77 0.118516874641363
0.78 0.124570041226761
0.79 0.131079598417666
0.8 0.138095318190491
0.81 0.145673489251674
0.82 0.153877702448799
0.83 0.162779627408483
0.84 0.172459706264591
0.85 0.183007645918538
0.86 0.194522528795617
0.87 0.207112281563606
0.88 0.220892143782509
0.89 0.23598167441299
0.9 0.252499748997004
0.91 0.270556982417281
0.92 0.290245137824398
0.93 0.311623450364934
0.94 0.334702495502595
0.95 0.359427280463487
0.96 0.385662470814745
0.97 0.413183659205362
0.98 0.441678698615934
0.99 0.470761772968971
1 0.500000000000001
};
\addlegendentry{m=50}
\addplot [semithick, color4]
table {%
0 0
0.01 0.000201979179694733
0.02 0.000407996603343835
0.03 0.000618175693014459
0.04 0.000832644864990311
0.05 0.00105153779300479
0.06 0.00127499368744961
0.07 0.00150315759178449
0.08 0.00173618069747407
0.09 0.00197422067888893
0.1 0.00221744204972959
0.11 0.00246601654266577
0.12 0.0027201235140316
0.13 0.00297995037557929
0.14 0.00324569305547421
0.15 0.00351755649091208
0.16 0.00379575515495855
0.17 0.00408051362045409
0.18 0.00437206716409605
0.19 0.00467066241410802
0.2 0.00497655804523759
0.21 0.00529002552519198
0.22 0.00561134991703037
0.23 0.00594083074248847
0.24 0.00627878291171984
0.25 0.00662553772550775
0.26 0.00698144395663721
0.27 0.00734686901782994
0.28 0.00772220022444449
0.29 0.00810784616104178
0.3 0.00850423816192656
0.31 0.00891183191691415
0.32 0.00933110921485593
0.33 0.00976257983890988
0.34 0.0102067836291858
0.35 0.0106642927302606
0.36 0.0111357140431764
0.37 0.0116216919039476
0.38 0.0121229110133509
0.39 0.012640099645916
0.4 0.0131740331696315
0.41 0.0137255379120064
0.42 0.014295495412872
0.43 0.0148848471097743
0.44 0.0154945995081167
0.45 0.0161258298955165
0.46 0.0167796926683102
0.47 0.0174574263479976
0.48 0.0181603613768958
0.49 0.0188899287957107
0.5 0.0196476699214627
0.51 0.0204352471627121
0.52 0.0212544561308291
0.53 0.0221072392318477
0.54 0.022995700954018
0.55 0.0239221251025691
0.56 0.0248889942766408
0.57 0.025899011935394
0.58 0.0269551274629016
0.59 0.0280605647169574
0.6 0.0292188546384661
0.61 0.0304338726094164
0.62 0.0317098813834667
0.63 0.033051580580076
0.64 0.0344641639388529
0.65 0.0359533857856183
0.66 0.0375256384788408
0.67 0.0391880430019443
0.68 0.0409485553661534
0.69 0.0428160921200021
0.7 0.0448006790650867
0.71 0.0469136283062251
0.72 0.0491677500892951
0.73 0.0515776075984273
0.74 0.0541598251278397
0.75 0.0569334629935371
0.76 0.0599204764566916
0.77 0.0631462811418123
0.78 0.0666404544355221
0.79 0.0704376118291384
0.8 0.0745785100817919
0.81 0.0791114467813482
0.82 0.0840940502528083
0.83 0.0895955873931004
0.84 0.0956999633177597
0.85 0.102509649884432
0.86 0.110150864484947
0.87 0.118780427900675
0.88 0.128594853351432
0.89 0.139842325134287
0.9 0.152838216650616
0.91 0.167984427601099
0.92 0.185791529524223
0.93 0.206899360325231
0.94 0.232084339700065
0.95 0.262228061755775
0.96 0.298202297199608
0.97 0.340614060292097
0.98 0.389392508623967
0.99 0.443343006455033
1 0.499999999999999
};
\addlegendentry{m=100}
\end{axis}

\end{tikzpicture}

  \end{subfigure}
  \quad
  \begin{subfigure}[t]{8cm}
    \small
    \centering
    \begin{tabular}{l|ccccc}
                        \toprule
                        $\relphi \backslash m$ & 5 & 10 & 20 & 50 & 100 \\
                        \midrule
                        0 & 0.000 & 0.000 & 0.000 & 0.000 & 0.000 \\
                        0.05 & 0.118 & 0.209 & 0.345 & 0.567 & 0.724 \\
                        0.1 & 0.224 & 0.360 & 0.527 & 0.734 & 0.846 \\
                        0.15 & 0.321 & 0.477 & 0.641 & 0.815 & 0.898 \\
                        0.2 & 0.414 & 0.572 & 0.722 & 0.864 & 0.927 \\
                        0.25 & 0.504 & 0.653 & 0.784 & 0.899 & 0.946 \\
                        0.3 & 0.594 & 0.727 & 0.835 & 0.925 & 0.960 \\
                        0.35 & 0.687 & 0.796 & 0.880 & 0.946 & 0.972 \\
                        0.4 & 0.783 & 0.863 & 0.921 & 0.965 & 0.982 \\
                        0.45 & 0.886 & 0.930 & 0.961 & 0.983 & 0.991 \\
                        0.5 & 1.000 & 1.000 & 1.000 & 1.000 & 1.000 \\
                        \bottomrule
    \end{tabular}
  \end{subfigure}
  \caption{\label{fig:mallows-issues}Average normalized positionwise
    distances of Mallows elections from $\ID$ (plot on the top-left),
    relative expected number of swaps in votes drawn from the Mallows
    model (plot on the bottom-left), both depending on $\phi$ and for
    different numbers $m$ of candidates, and---in the table---the
    values of $\phi$ such that $\relswaps(m,\phi) = \relphi$ for
    $m\in \{5,10,20,50,100\}$ and
    $\relphi\in \{0,0.05,0.1,0.15,0.2,0.25,0.3,0.35,0.4,0.45,0.5\}$.
  }
\end{figure}

\noindent\textbf{Importance of the New Normalization.} 
The new parameterization of Mallows seems to be important. The Mallows
model is often used in experiments and---in light of our
findings---using a fixed $\phi$ for different numbers of candidates or
drawing $\phi$ from a distribution independent of the number of
candidates, may be questionable. Yet, this is not uncommon, as
witnessed, e.g., in the works of
\citet{bac-lev-lew-zic:c:misrepresentation},
\citet{bet-bre-nie:j:kemeny},
\citet{gar-gel-sak-goe:c:top-three},
\citet{gol-lan-mat-per:c:rank-dependent},
\citet{sko-fal-sli:j:multiwinner},
and in a number of other
papers. We mention these works as examples only; their authors
designed their experiments as best practice suggested at the time and
we do not challenge their high-level conclusions. Our point is that
given the current evidence,
they might 
prefer to design their
experiments a bit differently.

\subsection{Real-Life Elections on the Map} \label{sub:maprel}
Let us now consider where 
real-life elections appear on the map. We start by describing the
datasets that we use (mostly from
PrefLib~\cite{mat-wal:c:preflib}).
Whenever we speak of real-life elections in this section, we mean
elections from our datasets.

We chose eleven real-life datasets, where each belongs to one of three groups.
The first group contains \textit{political} elections: city council
elections from Glasgow and Aspen~\cite{openstv}, elections from
North Dublin, Meath (Irish), and elections held by
non-profit organizations, trade unions, and professional organizations
(ERS). The second group consists of \textit{sport} elections: Tour de
France
(TDF), 
Giro d'Italia (GDI), 
speed skating, 
and figure skating (the former three dataset are due to us). The last
group consists of \textit{surveys}: preferences over Sushi, T-Shirt
designs, and costs of living and population in different cities~\cite{caragiannis2019optimizing}. For TDF and GDI, each race is a
vote, and each season is an election. For speed skating, each lap is a
vote, and each competition is an election. For figure skating, each
judge's opinion is a vote, and each competition is an election.

\paragraph{Preprocessing the Data.} 
Each of our datasets consists of a number of preference profiles,
where each profile consists of (possibly) partial preference
orders. Since for our map we need elections with complete preference
orders, we preprocess the data as follows.\footnote{We speak of
  \emph{elections} when we mean the collections of preference orders
  that we used in the map. We speak of \emph{preference profiles} when
  we mean collections of 
  preference orders 
   at various stages of
  preprocessing.}  First, for our new
sport-related datasets (i.e., for TDG, GDI, and speed skating) we
delete candidates and voters until each remaining candidate is ranked
by at least 70\% of the voters, and each voter ranks at least 70\% of
the candidates.  Second, for all the datasets we disregard those
profiles that contain fewer than ten candidates.  Third, we extend
each partial preference order in each remaining profile as follows
(our approach is a much-simplified variant of the technique introduced
by \citet{dou:thesis:partial-social-choice}):
\begin{enumerate}
\item If some $t$ top candidates are ranked and the remaining ones are
  not (except that they are reported to be below the top ones), then
  we fill~$v$ iteratively: (1) We draw uniformly at random one of
  the original votes from the same profile that ranks at least the top
  $t+1$ candidates and that agrees with $v$ on the top $t$ positions,
  and (2) we extend $v$ with whoever is ranked on the $(t+1)$-st
  position in this drawn vote (if there are no votes to sample from,
  then we extend $v$ with a candidate chosen uniformly at random). We
  repeat this process until $v$ is complete.
\item If a vote contains a tie of a different type than described
  above, then we break it uniformly at random.
\end{enumerate}
After applying these preprocessing steps, each dataset contains
profiles with ten or more candidates, and with complete votes. For
each such profile, we select the ten candidates with the highest Borda
score and remove the other ones.  Finally, we delete some of the
profiles based on the number of voters they contain (as compared to
the other profiles in a given dataset; see \Cref{ap:datasets} for details).  
We refer to the resulting datasets as \emph{intermediate}.

We treat each of the intermediate real-life datasets as a separate
election model, from which we sample fifteen elections. 
We sample each election as follows. 
First, we randomly select
one of the profiles. Second, we sample $100$ votes from the profile
uniformly at random (this implies that for profiles with less than
$100$ votes, we select some of the votes multiple times, and for
profiles with more than $100$ votes, we do not select some votes at
all).  After executing this procedure, we arrive at $11$ datasets,
each containing $15$ elections consisting of $100$ complete and strict
votes over $10$ candidates, which we use for our experiments.
For a more detailed description of our data and its processing, see \Cref{ap:datasets}.

\paragraph{The Map.}
In Figure~\ref{fig:10x100RW}, we show a map of our real-life elections
along with the compass, Mallows, and Urn elections. For readability we
present Mallows and Urn elections as large, pale-colored areas. Not
all real-life elections form clear clusters, hence the labels refer to
the largest compact groupings.

While the map is not a perfect representation of distances among
elections,
analyzing it nevertheless leads to many conclusions.
Most strikingly, real-life elections occupy a very
limited area of the map; this is especially true for political
elections and surveys. Except for several sport elections, all
elections are closer to $\UN$ than to $\ID$, and none of the
real-life elections falls in the top-right part of the map. Another
observation is that Mallows elections go right through the real-life
elections, while Urn elections are on average far away. This means
that for most real-life elections there exists a parameter~$\phi$ such
that elections generated according to the Mallows model with that
parameter are relatively close
(see the next section for specific recommendations).
    
Most of the political elections lie close to one another and are
located next to Mallows elections and high-dimensional hypercube
ones. At the same time, sport elections are spread over a larger part
of the map and, with the exception of GDI, are shifted toward
$\ID$. As to the surveys, the City survey is basically equivalent to a
sample from IC. The Sushi survey is surprisingly similar to political
elections. The T-Shirt survey is shifted toward stratification
(apparently, people often agree which designs are better and which are
worse).

\subsection{Capturing Real-Life Elections} \label{sub:recom}
In this section we analyze how to choose the $\relphi$ parameter so
that elections generated using the Mallows model with our normalization resemble our real-life
ones. We consider four different datasets each consisting of elections with 10 candidates and 100 voters (created as descibed in \Cref{sub:maprel}): the set of all political
elections, the set of all sport elections, the set of all survey
elections, and the combined set of all our real-life elections. For
each of these four datasets, to find the value of $\relphi$ that
produces elections that are as similar as possible to the respective
real-life elections, we conduct the following experiment. For each
$\relphi\in \{0,0.001,0.002,...,0.499,0.5\}$, we generate $100$
elections with 10 candidates and 100 voters from the Mallows model
with the given $\relphi$ parameter. Subsequently, we compute the
average distance between these elections and the elections from the
respective dataset. Finally, we select the value of $\relphi$ that minimizes this
distance. We present the results of this experiment in
\Cref{ta:realphiRL}.

\begin{table*}
\centering
\begin{tabular}{l|c|c|c|c}
    \toprule
    Type of elections& Value of & Average Normalized & Norm. Std.& Num. of \\
   & $\relphi$ & Distance & Dev. &elections \\
    \midrule
    Political elections & $0.375$ & $0.15$ & $0.036$ & 60 \\
    Sport elections & $0.267$ & $0.27$ & $0.080$ & 60 \\
    Survey elections & $0.365$ & $0.20$ & $0.034$ & 45 \\ 
   All real-life elections & 0.350 &  0.22& 0.106& 165\\
    \bottomrule
  \end{tabular}
  \caption{\label{ta:realphiRL}Values of $\relphi$ such that elections
    generated with Mallows model for $m=10$ are, on average, as close
    as possible to elections from the respective dataset. We include
    the average distance of elections generated with Mallows model for
    this parameter $\relphi$ from the elections from the dataset as
    well as the standard deviation, both normalized by $D(10)=33$.
    The last column gives the number of elections in the respective
    real-life dataset.}
\end{table*}

Recall that in the previous section we have observed that a majority
of real-life elections are close to some elections generated from the
Mallows model with a certain dispersion parameter. However, we have
also seen that the real-life datasets consist of elections that differ
to a certain extent from one another (in particular, this is very
visible for the sports elections).  Thus, it is to be expected that
elections drawn from the Mallows model for a fixed dispersion
parameter are at some non-zero (average) distance from the real-life
ones. Indeed, this is the case here.  Nevertheless, the more
homogeneous political elections and survey elections can be quite well
captured using the Mallows model with parameter $\relphi=0.375$ and
$\relphi=0.365$, respectively.  Generally speaking, if one wants to
generate elections that should be particularly close to elections from
the real world, then choosing a $\relphi$ value between $0.35$ and
$0.39$ is a good strategy. If, however, one wants to capture the full
spectrum of real-life elections, then we recommend using the Mallows
model with different values of $\relphi$ from the interval
$[0.25,0.4]$. In \Cref{ssub:malm}, we provide some evidence that these
recommendations are also applicable for different numbers of
candidates.

\section{Conclusions and Future Work}
Perhaps the most important conclusion from our work is that the
Mallows model is very good at generating elections similar, under positionwise distance, to those
that appear in the real world, but to achieve this effect one needs to
choose its dispersion parameter carefully. Specifically, the parameter
should depend both on the type of elections that we are interested in
(such as, e.g., political ones or sports ones) and on the number of
candidates in the election. We have provided some recommendations for
its choice.

Our work leads to a number of open problems. On the practical side,
all our experiments regarded elections with exactly $10$
candidates. It is important to extend them to different numbers of
candidates and validate that our conclusions still hold (see
\Cref{ssub:malm} for an initial discussion regarding this issue). On
the mathematical side, one of the most intriguing question is whether,
indeed, the identity and uniformity election are the two farthest
ones in the positionwise metric.\bigskip

\noindent\textbf{Acknowledgments.} NB was supported by the DFG project
MaMu (NI 369/19). PF conducted this research based on his ERC
project PRAGMA.

\bibliographystyle{plainnat}

\newpage 

\appendix

{\centering
  \huge\bf
  Appendix

\bigskip
}

\section{Missing Material from \Cref{se:recov}} \label{se:recov_app}
\subsection{Missing Proofs and Discussion}

\polyalgo*
\begin{proof}
  Let $X$ be our input $m \times m$ matrix and let
  $C = \{c_1, \ldots, c_m\}$ be a set of candidates.  Our algorithm
  creates an election $E=(C, V)$ iteratively, as follows. In each
  iteration we first create a bipartite graph $G$ with vertex sets
  $A=\{a_1,\dots ,a_m\}$ and $B=\{b_1,\dots ,b_m\}$. For each
  $i, j \in [m]$, if $x_{i,j}$ is nonzero, then we put an edge
  between~$a_i$ and~$b_j$ (vertices in $A$ correspond to rows of $X$
  and vertices in $B$ correspond to the columns).  Next, we compute a
  perfect matching~$M$ in~$G$ (we will see later that it is guaranteed
  to exist).  Let $v$ be the vote that ranks $c_j$ on position $i$
  exactly if $M(a_i)=b_j$ ($v$ is well-defined because $M$ is a
  perfect matching).  Let $P$ be the position matrix corresponding to
  vote $v$, i.e., to election $(C,(v))$, and let $z$ be the largest
  integer such that $X-zP$ contains only non-negative entries.  Then,
  we add $z$ copies of $v$ to $V$ and set $X:=X-zP$.  We proceed to
  the next iteration until $X$ becomes the zero matrix.
  
  To prove the correctness of the algorithm, we show that at each
  iteration the constructed graph $G$ has a perfect matching.
  Let us assume that
  this is not the case.  Note that each row and each column in the
  current $X$ sums up to the same integer, say $n'$.  Since there is
  no perfect matching, by Hall's theorem, there is a subset of
  vertices $A'\subseteq A$ such that the neighborhood $B'\subseteq B$
  of $A'$ in $G$ contains fewer than $|A'|$ vertices. Yet, we have
  that $\sum_{a_i\in A', b_j\in B'} x_{i,j}=n' |A'|$, as we sum up all
  the nonzero entries of each row corresponding to a vertex from $A'$.
  However, this implies that $|B'| \geq |A'|$ because each column
  corresponding to a vertex from $B'$ sums up to $n'$, but we do not
  necessarily include all its nonzero entries. This is a
  contradiction.
  
  The algorithm terminates after at most $m^2-m+1$ steps (in each step
  at least one more entry of $X$ becomes zero, and in the last step,
  $m$ entries become zero).  Each step requires $\mathcal{O}(m^{2.5})$
  time to compute the matching, so the overall running time is
  $\mathcal{O}(m^{4.5})$.  This implies that $V$ contains at most
  $m^2-m+1$ different votes; indeed, using a similar argument as in  \citet{leep1999marriage} it can be shown that the algorithm always terminates
  after at most $m^2-2m+2$ steps.
\end{proof}

The above proof shows
that if all the entries of a given position matrix are at least
$t > 0$, then we can create an election that induces this matrix by
first choosing $t$ votes completely arbitrarily, and only then
resorting to matching in a bipartite graph.
In consequence, position matrices where all entries are large
correspond to large and varied sets of elections. While in some
situations this is unavoidable (e.g., when one considers impartial
culture elections with many more voters than candidates), it may mean
that the features observed for one election corresponding to a given
matrix
are not shared by many of the others.

Fortunately, the matrices in the datasets of
\citet{szu-fal-sko-sli-tal:c:map} either contain zero entries or have a
very small smallest entry (in the $10 \times 100$ dataset, 
69\% of the matrices have zero entries, and among the remaining ones,
the average value of the smallest entry is 2.61 with standard
deviation 1.15).

Naturally, if a matrix does have some zero entries it still may
correspond to a large and varied set of elections; we simply claim
that if all the entries are large then this is, in essence,
unavoidable.

\freqtopos*
\begin{proof}[Proof (continued from \Cref{se:recov}).]
  Here we provide a fully deterministic algorithm for computing the
  matrix $D$ from the proof presented in the main body, which does not
  invoke dependent rounding, and which minimizes the value
  $\sum_{1 \leq i,j \leq m} |nx_{i,j}-p_{i,j}|$.
  
  Consider matrix~$Y$ from the first part of the proof and let
  $B = \sum_{1 \leq i,j \leq m} y_{i,j}$.

  We form a flow network with source~$s$, nodes $v_{i,j}$ for each
  $i,j \in [m]$, ``pre-sink'' nodes $t_1, \ldots, t_m$, and sink
  node~$t$. For each $i \in [m]$, we have a directed path which starts
  at the source node~$s$, then goes to $v_{i,1}$, next to $v_{i,2}$,
  and so on, until $v_{i,m}$. Each edge on this path has capacity
  equal to $\sum_{j=1}^m y_{i,j}$ (recall that this value is an
  integer).  For each $i, j \in [m]$, we have an edge from $v_{i,j}$
  to $t_j$, with capacity one and with cost $1-2y_{i,j}$ (all the
  other types of edges have cost $0$). Finally, for each $j \in [m]$,
  we have an edge from $t_j$ to $t$ with capacity
  $\sum_{i=1}^m y_{i,j}$ (this value is an integer). Next, we compute
  in polynomial time (using some classic algorithm) an integral flow
  that moves $\sum_{i,j\in[m]} y_{i,j}$ units of flow from $s$ to $t$
  at the lowest possible cost (which we denote $B_f$). If this flow
  existed, then we could compute matrix~$D$ by setting, for each
  $i, j \in [m]$, $d_{i,j}$ to be the amount of flow ($0$ or~$1$)
  going from $v_{i,j}$ to~$t_j$. Indeed, by definition of our flow
  network, for each~$i$ we would have that
  $\sum_{j=1}^m d_{i,j} = \sum_{j=1}^m y_{i,j}$ (because when a flow
  enters some node $v_{i,j}$, then it can either go to $t_j$ or to
  $v_{i,j+1}$). Due to the capacities on the edges from the pre-sink
  nodes to the sink, for each $j \in [m]$ we would also have that
  $\sum_{i=1}^m d_{i,j} = \sum_{i=1}^m y_{i,j}$.  These two properties
  are equivalent to ensuring that the degrees of the vertices in~$G'$
  are equal to the fractional degrees in~$G$, as is done by dependent
  rounding. Further, we have that:
  \[
    \sum_{1 \leq i,j \leq m} |nx_{i,j} - (\lfloor nx_{i,j} \rfloor + d_{i,j})| = B + B_f.
  \]
  To see why this is the case, note that if all the values $d_{i,j}$
  where $0$, then the left-hand sum would be $B$, and on the
  right-hand size $B_f$ would be~$0$.  Each $d_{i,j} = 1$ increases
  both sides of the sum by the same amount.
  
  It remains to see that the desired flow indeed always exists. However, this
  follows from the previous algorithm: Since dependent rounding always
  produces a desired matching, which induces matrix~$D$, $D$
  induces a flow of our desired value.    
\end{proof}

\section{Missing Material from \Cref{se:setcomp}}\label{ap:4}
\subsection{The Four Matrices}
We start this subsection by proving \Cref{pr:calc}. Afterwards, we provide some evidence that our four compass elections are, indeed, almost as far away from each other as possible. 
\dist*
\begin{proof}
\textbf{$\mathbf{\textbf{ID}_m}$ and $\mathbf{\textbf{UN}_m}$:} We start by computing the distance between $\ID_m$ and $\UN_m$. Note that $\UN_m$ always remains the same matrix independent of how its columns are ordered. Thus, we can compute the distance between these two matrices using the identity permutation between the columns of the two matrices: 
    $\POS(\ID_m,\UN_m) = \sum_{i=1}^m \EMD((\ID_m)_i, (\UN_m)_i)=\textstyle\sum_{i=1}^m (\textstyle\sum_{j=1}^{i-1} \frac{j}{m} + \textstyle\sum_{j=1}^{m-i} \frac{j}{m}) 
\\ = \frac{1}{m} \textstyle\sum_{i=1}^m ( \frac{1 + (i-1)}{2}(i-1) + \frac{1 + (m-i)}{2}(m-i) ) 
\\ = \frac{1}{2m} \textstyle\sum_{i=1}^m (2i^2 - 2i - 2mi + m^2  + m) 
\\ = \frac{1}{2m} (2\frac{m(m+1)(2m+1)}{6} - m(m+1)-m^2(m+1)  + m(m^2 + m))
\\ = \frac{1}{2m} (\frac{(m^2+m)(2m+1)}{3} - (m+1)(m+m^2)  + m(m^2 + m))
\\ = \frac{m+1}{2} (\frac{(2m+1)}{3} - (m+1)  + m) = \frac{(m+1)(m-1)}{3} = \frac{1}{3}(m^2-1).$

In the following, we use $(*)$ when we omit some calculations analogous to the calculations for $\POS(\ID_m,\UN_m)$.

\medskip
\noindent \textbf{$\mathbf{\textbf{UN}_m}$ and $\mathbf{\textbf{ST}_m}$:} Similarly, we can also directly compute the distance between $\UN_m$ and $\ST_m$ using the identity permutation between the columns of the two matrices. In this case, all column vectors of the two matrices have indeed the same $\EMD$ distance to each other:

 \noindent
 $\POS(\UN_m,\ST_m) =  m\cdot (\frac{1}{2}+2\cdot\textstyle\sum_{i=1}^{\frac{m}{2}-1} \frac{i}{m})= \frac{m}{2}+\frac{m}{2}(\frac{m}{2}-1) = \frac{m^2}{4}.$
 
  \medskip
\noindent \textbf{$\mathbf{\textbf{UN}_m}$ and $\mathbf{\textbf{AN}_m}$:} Next, we compute the distance between $\UN_m$ and $\AN_m$ using the identity permutation between the columns of the two matrices.  Recall that $\AN_m$ can be written as:
\[
  \AN_m = 0.5\begin{bmatrix}
    \ID_{\nicefrac{m}{2}} & \rID_{\nicefrac{m}{2}} \\
    \rID_{\nicefrac{m}{2}} & \ID_{\nicefrac{m}{2}}
  \end{bmatrix}.
\] Thus, it is possible to reuse our ideas from computing the distance between identity and uniformity:

  \noindent
 $\POS(\UN_m,\AN_m) = 4 \textstyle\sum_{i=1}^{\frac{m}{2}}(\textstyle\sum_{j=1}^{i-1} \frac{j}{m} + \textstyle\sum_{j=1}^{\frac{m}{2}-i} \frac{j}{m}) = (*) = \frac{2}{3}(\frac{m^2}{4}-1).$
 
 \medskip
\noindent \textbf{$\mathbf{\textbf{ID}_m}$ and $\mathbf{\textbf{ST}_m}$:}
There exist only two different types of column vectors in $\ST_m$, i.e.,  $\frac{m}{2}$ columns starting with $\frac{m}{2}$ entries of value $\frac{2}{m}$ followed by $\frac{m}{2}$ zero-entries and $\frac{m}{2}$ columns starting with $\frac{m}{2}$ zero entries followed by $\frac{m}{2}$ entries of value $\frac{2}{m}$. In $\ID_m$, $\frac{m}{2}$ columns have a one entry in the first $\frac{m}{2}$ rows and $\frac{m}{2}$ columns have a one entry in the last $\frac{m}{2}$ rows. Thus, again the identity permutation between the columns of the two matrices minimizes the $\EMD$ distance:

  \noindent
 $\POS(\ID_m,\ST_m) = 2\cdot \POS(\ID_{\frac{m}{2}},\UN_{\frac{m}{2}}) = \frac{2}{3}(\frac{m^2}{4}-1)$
 
  \medskip
\noindent \textbf{$\mathbf{\textbf{AN}_m}$ and $\mathbf{\textbf{ST}_m}$:} We now turn to computing the distance between $\AN_m=(\an_1,\dots , \an_m)$ and $\ST_m=(\stt_1,\dots , \stt_m)$. As all column vectors of $\AN_m$ are palindromes, each column vector of $\AN_m$ has the same $\EMD$ distance to all column vectors of $\ST_m$, i.e., for $i\in [m]$ it holds that $\EMD(\an_i,\stt_j)=\EMD(\an_i,\stt_{j'})$ for all $j,j'\in [m]$. Thus, the distance between $\AN_m$ and $\ST_m$ is the same for all permutation between the columns of the two matrices. Thus, we again use the identity permutation. 
We start by computing $\EMD(\an_i,\stt_i)$ for different $i\in [m]$ separately distinguishing two cases. Let $i\in [\frac{m}{4}]$. Recall that $\an_i$ has a $0.5$ at position $i$ and position $m-i+1$ and that $\stt_i$ has a $\frac{2}{m}$ at entries $j\in [\frac{m}{2}]$. We now analyze how to transform $\an_i$ to $\stt_i$. For all $j\in [i-1]$, it is clear that it is optimal that the value $\frac{2}{m}$ moved to position $j$ comes from position $i$. The overall cost of this is $\textstyle\sum_{j=1}^{i-1} \frac{2j}{m}$. Moreover, the remaining surplus value at position $i$ (that is, $\frac{1}{2}-\frac{2i}{m}$) needs to be moved toward the end. Thus, for $j\in [i+1,\frac{m}{4}]$, we move value $\frac{2}{m}$ from position $i$ to position $j$. The overall cost of this is  $\textstyle\sum_{j=1}^{\frac{m}{4}-i} \frac{2j}{m}$. Lastly, we need to move value $\frac{2}{m}$ to positions $j\in [\frac{m}{4}+1,\frac{m}{2}]$. This needs to come from position $m-i+1$. Thus, for each $j\in [\frac{m}{4}+1,\frac{m}{2}]$, we move value $\frac{2}{m}$ from position $m-i+1$ to position $j$. The overall cost of this is $\frac{1}{2}\cdot (\frac{m}{2}-i)+\textstyle\sum_{j=1}^{\frac{m}{4}} \frac{2j}{m}=\frac{1}{2}(\frac{m}{2}-i)+\frac{m}{16}+\frac{1}{4}$ 

Now, let $i\in [\frac{m}{4}+1,\frac{m}{2}]$. For $j\in [\frac{m}{4}]$, we need to move value $\frac{2}{m}$ from position $i$ to position $j$. The overall cost of this is $\frac{1}{2}\cdot (i-\frac{m}{4}-1)+\textstyle\sum_{j=1}^{\frac{m}{4}} \frac{2j}{m}=\frac{1}{2}\cdot (i-\frac{m}{4}-1)+\frac{m}{16}+\frac{1}{4}$. For $j\in [\frac{m}{4}+1,\frac{m}{2}]$, we need to move value $\frac{2}{m}$ from position $m-i+1$ to position $j$. The overall cost of this is $\frac{1}{2}\cdot (\frac{m}{2}-i)+\textstyle\sum_{j=1}^{\frac{m}{4}} \frac{2j}{m}=\frac{1}{2}\cdot (\frac{m}{2}-i)+\frac{m}{16}+\frac{1}{4}$. 

Observing that the case $i\in [\frac{3m}{4}+1,m]$ is symmetric to $i\in [\frac{m}{4}]$ and the case $i\in [\frac{m}{2}+1,\frac{3m}{4}]$ is symmetric to $i\in [\frac{m}{4}+1,\frac{m}{2}]$ the $\EMD$ distance between $\AN_m$ and $\ST_m$ can be computed as follows:

\noindent$\POS(\AN_m,\ST_m) = 2\cdot ( A + \frac{1}{2}\cdot (\sum_{i=1}^{\frac{m}{4}} \frac{m}{2}-i)  + \frac{m}{4}\cdot (\frac{m}{16}+\frac{1}{4})+ \frac{1}{2} \cdot (\sum_{i=\frac{m}{4}+1}^{\frac{m}{2}} \cdot (i-\frac{m}{4}-1))+\frac{m}{4}\cdot (\frac{m}{16}+\frac{1}{4})+ \frac{1}{2}\cdot (\sum_{i=\frac{m}{4}+1}^{\frac{m}{2}} \frac{m}{2}-i)+\frac{m}{4}\cdot (\frac{m}{16}+\frac{1}{4}))\\
=\frac{m^2}{48} - \frac{1}{3} + \frac{3m^2-4m}{32}  + \frac{m}{2}\cdot (\frac{m}{16}+\frac{1}{4})+  \frac{m^2-4m}{32}+\frac{m}{2}\cdot (\frac{m}{16}+\frac{1}{4})+ \frac{m^2-4m}{32}+\frac{m}{2}\cdot (\frac{m}{16}+\frac{1}{4}))\\
=\frac{m^2}{48}-\frac{1}{3}+\frac{3m^2-4m}{32}+\frac{3m}{2}(\frac{m}{16}+\frac{1}{4})+\frac{m^2-4m}{16}=(\frac{1}{48}+\frac{3}{32}+\frac{3}{32}+\frac{1}{16})m^2+(-\frac{4}{32}+\frac{3}{8}-\frac{4}{16})m\frac{1}{3}=\frac{13}{48}m^2-\frac{1}{3}$

\smallskip
\noindent with 

\noindent$A =  \textstyle\sum_{i=1}^{\frac{m}{4}}(\textstyle\sum_{j=1}^{i-1} \frac{2j}{m} + \textstyle\sum_{j=1}^{\frac{m}{4}-i} \frac{2j}{m}) = (*) = \frac{1}{6}(\frac{m^2}{16}-1) = \frac{1}{2}(\frac{m^2}{48} - \frac{1}{3})$

  \medskip
  
\noindent \textbf{$\mathbf{\textbf{ID}_m}$ and $\mathbf{\textbf{AN}_m}$:} Lastly, we consider $\ID_m=(\id_1,\dots , \id_m)$ and $\AN_m=(\an_1,\dots , \an_m)$. Note that, for $i\in [m]$, $\id_i$ contains a $1$ at position $i$ and $\an_i$ contains a $0.5$ at position $i$ and position $m-i$. Note further that for $i\in [\frac{m}{2}]$ it holds that $\an_i=\an_{m-i+1}$.
Fix some $i\in [\frac{m}{2}]$. For all $j\in [i,m-i+1]$ it holds that $\EMD(\an_i,\id_j)=\frac{m-2i+1}{2}$ and for all $j\in [1,i-1]\cup [m-i+2,m]$ it holds that $\EMD(\an_i,\id_j)>\frac{m-2i+1}{2}$. 
That is, for every $i\in [m]$, $\an_i$ has the same distance to all column vectors of $\ID_m$ where the one entry lies in between the two $0.5$ entries of $\an_i$ but a larger distance to all column vectors of $\ID_m$ where the one entry is above the top $0.5$ entry of $\an_i$ or below the bottom $0.5$ entry of $\an_i$. Thus, it is optimal to choose a mapping of the column vectors such that for all $i\in [m]$ it holds that $\an_i$ is mapped to a vector $\id_j$ where the one entry of $\id_j$ lies between the two $0.5$ in $\an_i$. This is, among others, achieved by the identity permutation, which we use to compute:

 \noindent
 $\POS(\ID_m,\AN_m) = 2 \textstyle\sum_{i=1}^{\frac{m}{2}} (\frac{1}{2} (m-2i+1)) 
 \\ = \frac{m}{2}m-\frac{m}{2}(\frac{m}{2}+1)+\frac{m}{2} = \frac{m^2}{4}$
\end{proof}

\bigskip
Focusing on $6\times 6$ matrices, we now argue that our four compass matrices are quite far away from each other. We focus on $m=6$, as this is the largest dimension for which the ILPs we will use in this section were able to compute optimal solutions within several hours. While it could be the case that for larger $m$ the results are significantly different, we do not expect that this is the case (see e.g. \Cref{sec:four}). Moreover, the case $m=6$ is already interesting on its own, as elections with only six candidates also regularly appear in the real world.

We present two different justifications that our compass matrices cover very different areas on the map. First, we explain how we created the compass by adding the four matrices one after another, trying to maximize the distances between them. Second, we argue that each of the matrices is almost as far away from the other three as possible.

\subsubsection{Iteratively Building the Compass}
We built the compass in an iterative fashion, i.e., we added the matrices one after each other. 
We first wanted to find the two matrices that are furthest away from each other. As discussed in \Cref{sec:four}, we believe that these two matrices are the identity and uniformity matrices. For $m=6$, using an ILP, we were able to verify that, indeed, $\ID_6$ and $\UN_6$ have the highest distance among all pairs of $6 \times 6$ frequency matrices. 

Next, we wanted to find a frequency matrix that is as far away from $\ID_6$ and $\UN_6$ as possible.  However, in this context, it is not entirely clear what ``as far as possible'' means, as one might, e.g., maximize the sum or the minimum of the distances to $\ID_6$ and $\UN_6$.
Using again an ILP, we found two (quite unstructured) matrices $M$ (with maximum minimum distance) and $S$ (with maximum summed distance): 
\begin{align*}M=
  \begin{bmatrix}
     0.55& 0.45& 0& 0& 0& 0 \\
     0.05&0.25 & 0.7&0 & 0& 0 \\
     0& 0.16& 0&0.6 & 0.24& 0 \\
     0& 0.14& 0.3& 0& 0.16& 0.4 \\
     0& 0& 0&0 &0.4 & 0.6 \\
     0.4& 0& 0& 0.4& 0.2& 0 \\
  \end{bmatrix} 
\end{align*}
\begin{align*}S=
  \begin{bmatrix}
    0.33 & 0.33 & 0.33 & 0 & 0 & 0 \\
    0.08 & 0.08 & 0.08 & 0.17 & 0.17 & 0.42 \\
    0 & 0 & 0 & 0.5 & 0.5 & 0 \\
    0.08 & 0.08 & 0.8 & 0.33 & 0.33 & 0.08 \\
    0.17 & 0.17 & 0.5 & 0 & 0 & 0.17 \\
    0.33 & 0.33 & 0 & 0 & 0 & 0.33 \\
  \end{bmatrix}.
\end{align*}

The distances of $M$ and $S$ to $\ID_6$ and $\UN_6$ can be found in 
\Cref{tab:dist}. As both $M$ and $S$ are quite unstructured, it is unclear how 
to generalize them to higher dimensions, and it is not intuitively clear what 
kind of elections they resemble, we started thinking about canonical matrices 
distant from identity and uniformity and came up with $\AN$ and $\ST$. As 
displayed in \Cref{tab:dist}, both $\AN_6$ and $\ST_6$ are quite far away from 
both $\ID_6$ and $\UN_6$. Overall, the summed distance of these two matrices to 
$\ID_6$ and $\UN_6$ is close to the maximum achievable distance.
We decided to first add antagonism to the compass. 

\begin{table}
\centering
  \begin{tabular}{l|ccc}
    \toprule
    & $\POS(\ID_6,\cdot)$ & $\POS(\UN_6,\cdot)$ &$\POS(\UN_6,\cdot)$\\
    &  & &+$\POS(\UN_6,\cdot)$ \\
    \midrule
    $M$  & 7.18 & 7.18  & 14.36 \\
   $S$  & 10 & 4.66 & 14.66 \\
    $\AN_6$& 9 & 5.33 & 14.33\\
    $\ST_6$& 9 & 5.33 & 14.33\\
    \bottomrule
  \end{tabular}
  \caption{\label{tab:dist} Distance of four $6\times 6$ matrices to $\ID_6$ and $\UN_6$. $M$ is the matrix with maximum possible minimum distance and $S$ the matrix with maximum possible summed distance to $\ID_6$ and $\UN_6$.}
\end{table}
In the next step, we  again computed two matrices $M'$ with maximum minimum distance from $\ID_6$, $\UN_6$, and $\AN_6$, and $S'$ with maximum summed distance from $\ID_6$, $\UN_6$, and $\AN_6$. It turned out that $S'=\ST_6$ and
\begin{align*}M'=
  \begin{bmatrix}
  0.33 & 0.33 & 0.33 & 0 & 0 & 0 \\
    0.33 & 0.33 & 0.33 & 0 & 0 & 0 \\
   0 & 0 & 0.08 & 0.25 & 0.33 & 0.33 \\
    0.33 & 0.33 & 0.25 & 0.08 & 0& 0 \\
    0 & 0 & 0 & 0.33 & 0.33 & 0.33 \\
    0 & 0 & 0 & 0.33 & 0.33 & 0.33 \\
  \end{bmatrix}.
\end{align*}

As depicted in \Cref{tab:dist2}, $\ST_6$ is quite far away from each of the three already fixed matrices from our compass. As, at the same time, $\ST_6$ maximizes the summed distance among all matrices, we selected stratification as our fourth compass matrix.
 
\subsubsection{Distance of One Compass Matrix from the Other Three}
We now describe a second experiment to verify that our compass matrices are indeed nearly as far away from each other as possible. For each of the four matrices $X\in \{\ID_6,\UN_6,\AN_6,\ST_6\}$, using an ILP, we compute the best matrix to replace $X$ in order to maximize the diversity of the resulting set of four matrices. That is, 
we compute the $6\times 6$ matrix with maximum summed distance to the matrices in $\{\ID_6,\UN_6,\AN_6,\ST_6\}\setminus \{X\}$ and the $6\times 6$ matrix with maximum minimum distance to a matrix in $\{\ID_6,\UN_6,\AN_6,\ST_6\}\setminus \{X\}$. The results of this experiment along with the minimum/summed distance of $X$ to the matrices in $\{\ID_6,\UN_6,\AN_6,\ST_6\}\setminus \{X\}$ are shown in \Cref{tab:div}. The results show that each compass matrix has the highest possible or close to the highest possible summed distance to the other three matrices. This implies that none of our matrices can be replaced by another matrix such that the resulting set covers a significantly larger area of the map. Concerning the minimum distance of each compass matrix to the other three, the compass matrices are no longer very close to being optimal. Nevertheless, each pair of matrices is quite far away from each other. Moreover, maximizing the minimum and summed distance are (at least partly) conflicting optimization goals and, for our purpose of putting a compass on the map, the summed distance is of greater importance.

\begin{table}
\centering
  \begin{tabular}{l|cccc}
    \toprule
    & $\POS(\ID_6,\cdot)$ & $\POS(\UN_6,\cdot)$ & $\POS(\AN_6,\cdot)$& $\POS(\ID_6,\cdot)$\\
    &  & & &+$\POS(\UN_6,\cdot)$ \\
    &  & & &+$\POS(\AN_6,\cdot)$ \\
    \midrule
    $M'$  & 7.17 & 7.16  & 7.82 & 22.15 \\
    $\ST_6$& 9 & 5.33 & 9.66 & 23.99\\
    \bottomrule
  \end{tabular}
  \caption{\label{tab:dist2} Distance of two $6\times 6$ matrices to $\ID_6$, $\UN_6$, and $\AN_6$. $M'$ is the matrix with maximum possible minimum distance to the three matrices.}
\end{table}
\subsection{Paths Between Election Matrices}
\paths*
\begin{proof}
  Let $Z = (z_1, \ldots, z_m) = \alpha X + (1-\alpha) Y$ be our convex
  combination of $X$ and $Y$.  We note two properties of the earth
  mover's distance. Let $a$, $b$, and $c$ by three vectors that
  consist of nonnegative numbers, where the entries in $b$ and $c$ sum up to the same
  value. Then, it holds that $\EMD(a+b,a+c) = \EMD(b,c)$. Further, for a
  nonnegative number~$\lambda$, we have that
  $\EMD(\lambda b, \lambda c) = \lambda\EMD(b,c)$.  Using these
  observations and the definition of the earth mover's distance, we
  note that:
  \begin{align*}
    \textstyle
    \POS(X,Z)  & \textstyle\leq \sum_{i=1}^m \EMD(x_i,z_i) \\
    &\!\!\!\!\!\!\!\!\!\!\!\!\!\!\!\! =   \textstyle\sum_{i=1}^m \EMD(x_i,\alpha x_i + (1-\alpha)y_i) \\
    &\!\!\!\!\!\!\!\!\!\!\!\!\!\!\!\! =   \textstyle\sum_{i=1}^m \EMD((1-\alpha)x_i, (1-\alpha)y_i) \\
    &\!\!\!\!\!\!\!\!\!\!\!\!\!\!\!\! =    \textstyle(1-\alpha) \sum_{i=1}^m \EMD(x_i,y_i) = (1-\alpha)\POS(X,Y).
  \end{align*}
  The last equality follows by our assumption regarding $X$ and
  $Y$. By an analogous reasoning we also have that
  $\POS(Z,Y) \leq \alpha \POS(X,Y)$. By putting these two inequalities
  together, we have that:
  \[
    \POS(X,Z) + \POS(Z,Y) \leq \POS(X,Y).
  \]
  By the triangle inequality, we have that $\POS(X,Y) \leq \POS(X,Z) + \POS(Z,Y)$
  and, so, we have that
  $\POS(X,Z) + \POS(Z,Y) = \POS(X,Y)$.
\end{proof}

\section{Missing Material from \Cref{se:appcomp}}
\subsection{Further Details on Normalized Mallows}
\label{ap:mallows}
We start this subsection by providing the missing computational details for our new parameterization that we propose in the main body. Afterwards, we look at where Mallows elections for a specific value of $\relphi$ end up on the map of elections for different numbers of candidates. 
\begin{table*}
\centering
  \begin{tabular}{l|cc|cc}
    \toprule
    & Minimum& Summed & Maxmimum minimum & Maxmimum summed\\
    &distance& distance & distance & distance\\
    \midrule
    $\ID$  & 5.33 & 26  & 7.75 & 26 \\
    $\UN$ & 5.33 & 26 & 7.63 & 26.44\\
    $\AN$  & 5.33 & 23.99   & 7.05 & 24.17 \\
    $\ST$  & 5.33 & 23.99  & 7.16 & 23.99 \\
    \bottomrule
  \end{tabular}
  \caption{\label{tab:div} For each of the four compass matrices (with m=6), the first column shows its minimum distance to the other three matrices and the second column its summed distance to the other three matrices. The third columns shows the maximum minimum distance of any matrix to the other three matrices and the fourth column the maximum summed distance of any matrix to the other three matrices.}
\end{table*}

\subsubsection{Computational Details}
We start with some definitions. Let $m\in \mathbb{N}$ and
$\phi\in [0,1)$. Moreover, recall that we denote by $\expswaps(m,\phi)$ the
expected swap distance between a reference vote~$v^*$ and a vote sampled
from Mallows model with dispersion parameter~$\phi$ and reference vote $v^*$ for $m$ candidates.
We denote by $\relswaps(m,\phi)$ the relative expected swap distance,
that is, the expected swap distance $\expswaps(m,\phi)$ normalized by
the maximum possible number $\frac{m(m-1)}{2}$ of swaps in a vote over
$m$ candidates.

In the main part, we proposed a new normalization of Mallows model
which, for a given $\relphi \in [0,1)$, crucially relies on finding a
$\phi \in [0,1)$ such that:
\begin{equation} \label{eq:mal1}
    \relswaps(m,\phi) = \relphi.
  \end{equation} We now describe how $\phi$ can be computed from $\relphi$. As a first step, we compute $\expswaps(m,\phi)$ for some given $m\in \mathbb{N}$ and $\phi\in [0,1)$ . Recall that, given $\phi\in [0,1)$ and a reference vote $v^*$ over $m$ candidates, the probability of sampling a vote $v$ over $m$ candidates under the Mallows model is:
\begin{equation}\label{eq:mal}
   \mathbb{P}_{\phi,v^*}(v)=\frac{1}{Z}\phi^{\kappa(v,v^*)} 
\end{equation}
with normalizing constant
$Z=1\cdot (1+\phi)\cdot (1+\phi+\phi^2)\cdot \dots
\cdot(1+\dots+\phi^{m-1})$.  Next, we need to compute the number of
different votes at some swap distance from a given vote over $m$ candidates. For this, we
create a table $T$ and let $T[m,i]$ denote the number of different
votes at some given swap distance $i\in [\frac{m(m-1)}{2}]$ from some
fixed vote over $m$ candidates. Notably, $T[m,i]$ corresponds to the
number of permutations over $m$ elements containing $i$ inversions, a
well-studied combinatorial problem (\cite{oeis}). Unfortunately, no
closed form expression to compute this is known. Instead, one needs to
resort to dynamic programming (\cite{oeis}): First, we initialize the
table with $T[m,0] = 1$. Subsequently, for increasing $m$, we update
the table starting from $i= 0$ and going to $i=\frac{m(m-1)}{2}$ as:
$$T[m,i]=T[m,i-1]+T[m-1,r]-T[m-1,r-m].$$

After precomputing $T$, we are able to compute $\expswaps(\phi,m)$ using \Cref{eq:mal} as: 
\begin{equation}
    \expswaps(\phi,m)=\frac{1}{Z}\sum_{i=0}^{\frac{m(m-1)}{2}} T[m,i]\cdot \phi^i.
\end{equation}
The idea behind this equation is that, for each $i\in [\frac{m(m-1)}{2}]$, there exist $T[m,i]$ different votes at swap distance $i$ from the reference vote, each being sampled with probability $\frac{1}{Z}\phi^i$. Using this, we are now able to reformulate \Cref{eq:mal1}: Given some $m\in \mathbb{N}$ and $\relphi \in [0,1)$, find $\phi$ such that:
$$-1+\frac{2}{\relphi \cdot Z \cdot m(m-1)}\sum_{i=0}^{\frac{m(m-1)}{2}} T[m,i]\cdot \phi^i=0.$$ Note that the left hand side of this equation is simply a polynomial in $\phi$ of maximum degree $\frac{m(m-1)}{2}$. Thus, solving \Cref{eq:mal1}  reduces to finding the root of a higher degree polynomial. While no general formula for finding such a root exists, there exists a whole branch of literature focusing on numerical methods for solving this problem, e.g., Newton's method (\cite{DBLP:books/lib/PressTVF07}, Chapter 9).\footnote{In our python implementation, we used the root() method from the numpy library, which relies on computing the eigenvalues of the companion matrix (\cite{DBLP:books/cu/HJ2012}).}

\begin{figure}
  \centering
  \begin{subfigure}[t]{6cm}
    \centering
    \includegraphics[width=5.7cm]{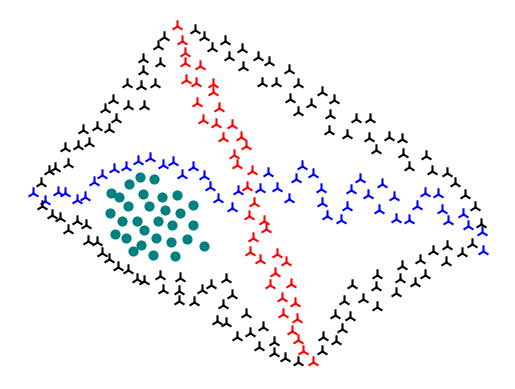}
    \caption{\label{fi:mal1} Mallows elections for $m=5$.}
  \end{subfigure}
  \quad
  \begin{subfigure}[t]{6cm}
    \centering
    \includegraphics[width=5.7cm]{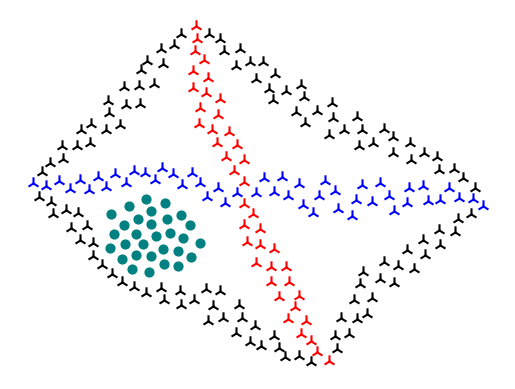}
    \caption{\label{fi:mal2}Mallows elections for $m=10$.}
  \end{subfigure}

  \begin{subfigure}[t]{6cm}
    \centering
    \includegraphics[width=5.7cm]{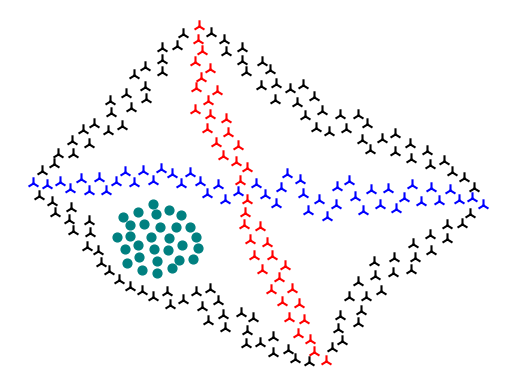}
    \caption{\label{fi:mal3} Mallows elections for $m=20$.}
  \end{subfigure}
  \quad
  \begin{subfigure}[t]{6cm}
    \centering
    \includegraphics[width=5.7cm]{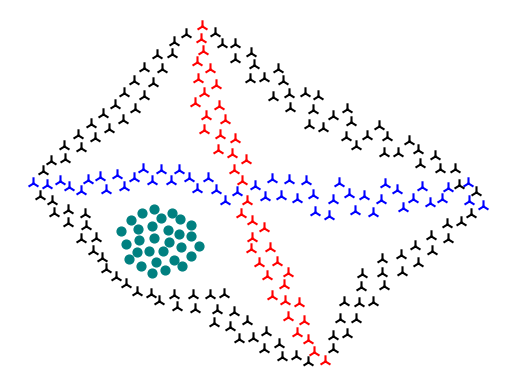}
    \caption{\label{fi:mal4} Mallows elections for $m=50$.}
  \end{subfigure}
  \quad

  \caption{\label{fi:malmaps}Visualization of our four compass matrices and the connecting paths for different numbers of candidates. Additionally, several elections randomly generated from Mallows model with parameter $\relphi=0.375$ for the respective number of candidates are included.}
\end{figure}
\subsubsection{Mallows Elections for Different Numbers of Candidates}\label{ssub:malm}

As we only looked at the special case $m=10$, it is, in principle questionable whether our recommendations which dispersion parameter to choose presented in \Cref{sub:recom} are also valid for $m>10$ candidates. As mentioned in the main body, we suspect that real-life elections of similar types end up at similar places on the map of elections with our compass on it for different numbers of candidates. Thus, it remains to check whether elections generated from Mallows model with some fixed $\relphi$ for different numbers of candidates also end up in similar positions on the respective maps. We confirm this hypothesis using \Cref{fi:malmaps}, where we depict the map of elections for $m\in \{5,10,20,50\}$ including the four compass matrices, the paths between them, and several elections generated using Mallows model with $\relphi=0.375$ (we look at $\relphi=0.375$ because this is the best parameter to model the quite homogeneous group of political elections). All maps look quite similar and elections generated using the Mallows model indeed always end up in very similar places. 

\subsection{Generation, Selection, and Preprocessing of Real-Life Datasets} \label{ap:datasets}

\subsubsection{Generation and Selection of Datasets}
As part of our work, we generate three new datasets from sports competitions, that are,  Tour de France elections, Giro d'Italia elections, and speed skating elections. As we are interested in elections with complete votes, as a preprocessing step, we always delete votes and candidates until each candidate is contained in at least 70\% of all votes and each vote contains 70\% of all candidates.

Both the Tour de France and Giro d'Italia are annual cycling competitions consisting of multiple stages. For each of the past 100 editions, we create a separate election with the riders as the candidates and the stages as the voters (where each voter ranks the candidates according to the finish times of the riders in the corresponding stage). To generate the elections, we use publicly available data from \url{https://www.procyclingstats.com}. The third new dataset consists of different elections modelling speed skating competitions. For this, we use data from \url{https://results.sporthive.com/} and select 51 speed skating races. Notably, each race consists of multiple laps. For each race, we create a separate election with the speed skaters as the candidates and the laps as the voters (where each voter ranks the candidates according to the lap time of the speed skaters in the corresponding lap).

In addition, our main source of real-life elections is the PrefLib database (\cite{mat-wal:c:preflib}). We categorize all election datasets from PrefLib in \Cref{tab:preflib}. As discussed in the main body, we want to compare elections with ten candidates. Moreover, as our model only allows to consider complete votes without ties, we are interested in datasets where votes are as complete as possible and  contain only a few ties. Based on these criteria, our decision which PrefLib datasets to include is displayed in \Cref{tab:preflib}.

\subsubsection{Preprocessing of Datasets}
For all elections from our selected datasets containing incomplete
votes (i.e., votes where some of the top candidates are ranked and the
remaining candidates are not), we need to fill-in the votes. For the
decision how to complete each vote, we use the other votes as
references assuming that voters that rank the same candidates on top
also continue to rank candidates similarly towards the bottom. For
each incomplete vote $v$, we proceed as follows. Let us assume that
the length of the vote~$v$ is~$n$. Let $V_P$ be the set of all
original votes of which $v$ is a prefix. We uniformly at random select
one vote $v_p$ from $V_P$ and then we add candidate $c=\pos(v_p, n+1)$
at the end of vote~$v$. We repeat the procedure until vote~$v$ is
complete. If the set~$V_P$ is empty, then we choose $c$ uniformly at
random (from those candidates that are not part of $v$ yet).
Moreover, if a vote contains ties (i.e., pairs or larger sets of
candidates that are reported as equally good, except for the case that
would fit the description of an incomplete vote above), we break them
randomly.

After applying these preprocessing steps, we arrive at a collection of datasets containing elections with ten or more candidates and complete votes without ties. As we focus on ten candidates, we need to select a subset of ten candidates for each election: We select the ten candidates with the highest Borda score.

\begin{table*}[t]
\centering
\resizebox{\textwidth}{!}{\begin{tabular}{c  c  c  c  c  c  c   c}
			\toprule
			PrefLib ID & Name & Size & $m$ & Type & Selected &  Reason to reject \\	
			\midrule
			1 & Irish & 3 & 9,12,14 & soi & yes &  -\\
			2 & Debian & 8 & 4-9 & toc & no & Too few candidates\\
			3 & Mariner & 1 & 32 & toc & no & Too many ties\\
			4 & Netflix & 200 & 3,4 & soc & no &Too few candidates \\
			5 & Burlington & 2 & 6 & toi & no & Too few candidates  \\
			6 & Skate & 48 & 14-30 & toc & yes &  -\\
			7 & ERS	& 87 & 3-29 & soi & yes &  -\\
			8 & Glasgow	& 21 & 8-13 & soi & yes&- \\
            9 & AGH & 2 & 7,9 & soc & no &  Too few candidates \\
            10.1 & Formula & 48 & 22-62 & soi & no &  Incomplete and few votes\\
            10.2 & Skiing & 2 & $\sim$50 & toc & no & Few votes and many ties \\
            11.1 & Webimpact & 3 &	103, 240, 242  & soc & no & Too many candidates and too few votes ($\sim$5)\\	
            11.2 & Websearch & 74 &	100-200, $\sim$2000 & soi & no & Too few votes ($\sim$4)	\\				
            12 & T-shrit & 1 & 11 & soc & yes &  -\\
            13 & Anes &	19 & 3-12 & toc &	no & Too many ties\\
            14 & Sushi & 1 & 10 & soc & yes & -\\
            15 & Clean Web & 79 & 10-50, $\sim$200 & soc & no & Too few votes ($\sim$4)	\\		
            16 & Aspen & 2 & 5,11 & toc & yes & 		-\\		
            17 & Berkeley & 1 & 4 & toc & no & Too few candidates 	\\			
            18 & Minneapolis & 4 & 7,9,379,477 & soi & no & Incomplete votes	\\			
            19 & Oakland & 7 & 4-11 & toc & no & Incorrect data (votes like: 1,1,1)\\
            20 & Pierce	& 4 & 4,5,7 & toc & no & Too few candidates			\\
            21 & San Francisco & 14 & 4-25 & toc & no & Incorrect data (votes like: 1,1,1)\\
            22 & San Leonardo & 3 & 4,5,7 & toc & no & Too few candidates \\ 
            23 & Takoma	& 1 & 4 & toc & no & Too few candidates \\
            24 & MT Dots & 4 & 4 & soc & no & Too few candidates					\\
            25 & MT Puzzles	& 4 & 4 & soc & no & Too few candidates				\\
            26 & Fench Presidential &	6 & 16 & toc & no & Approval ballots			\\	
            27 & Proto French & 1 & 15 & toc & no & Approval ballots					\\
            28 & APA & 12 & 5 & soi & no & Too few candidates				\\
            29 & Netflix NCW & 12 & 3,4 & soc & no & Too few candidates						\\
            30 & UK labor party	& 1 & 5 & soi & no & Too few candidates				\\
            31 & Vermont & 15 & 3-6 & toc & no & Approval ballots \\
            32 & Cujae	& 7 & 6,32 & soc/soi/toc & no & Many reasons				\\
            33 & San Sebastian Poster & 2 & 17 & toc & no & Approval ballots \\
            34 & Cities survey & 2 & 36, 48 & soi & yes &  -\\				
			\bottomrule
	\end{tabular}}
	\caption{\label{tab:preflib} Overview of all election datasets that are part of the PrefLib database.  ``Size'' stands for the number of elections in the dataset,  ``$m$'' for the number of candidates and  ``Type'' for the type of the votes in the dataset (soc means that all votes are strict complete orders; soi means that all votes are strict incomplete orders; toc means that all votes are weak incomplete orders).}
      \end{table*}

In \Cref{tab:preflib_selected} we present a detailed description of the resulting selected datasets. In some datasets only parts of the data meets our criteria. For example, in the dataset containing Irish elections we have three different election, but one of the them (an election from West Dublin) contains only 9 candidates. We delete all such elections. After doing so, we finally arrive at eleven real-life datasets containing elections meeting our criteria.

However, as we cannot include all elections from each dataset on the map of elections, we further reduce the number of elections by considering only selected elections. In \Cref{tab:preflib_selected}, we include in the column ``\# Selected Elections'' the number of elections we selected from each dataset in the end. We based our decision on the selection of elections on the number of voters and candidates. That is, for ERS, we only take election with at least $500$ voters, for Speed Skating with at least $80$, for TDF with at least $20$, and for Figure skating with at least $9$. Beside that for TDF, we only select elections with no more than $75$ candidates. We refer to the resulting datasets as \emph{intermediate} datasets. 

\subsubsection{Sampling Elections from Intermediate Datasets}
We treat each of our intermediate real-life datasets as a separate election model from which we sample $15$ elections to create the final datasets that we use in the paper. For each intermediate dataset, we sample elections as follows. First, we randomly select one of the elections. Second, we sample $100$ votes from the election uniformly at random (this implies that for elections with less than $100$ votes, we select some votes multiple times, and for elections with more than $100$ votes, we do not select some votes at all). We do so to make full use of elections with far more than $100$ votes. For instance, our Sushi intermediate dataset contains only one election consisting of $5000$ votes. Sampling an election from the Sushi  intermediate dataset thus corresponds to drawing $100$ votes uniformly at random from the set of $5000$ votes. On the other hand, for intermediate datasets containing a higher number of elections, e.g., the Tour de France intermediate dataset, most of the sampled elections come from a different original election.

After executing this procedure, we arrive at $11$ datasets each containing $15$ elections consisting of $100$ complete and strict votes over $10$ candidates, which we use for our experiments.

\begin{table*}[t]
\centering
\resizebox{\textwidth}{!}{\begin{tabular}{c  c  c  c  c  c  c}
			\toprule
			Source & Category & Name & \# Selected Elections & Avg $m$ & Avg $n$ & Description\\	
			\midrule
			Preflib & Political & Irish & 2 & 13 & $\sim$ 54011 & Elections from North Dublin and Meath\\
			Preflib & Political & Glasgow	& 13 & $\sim$ 11 & $\sim$ 8758 & City council elections \\
            Preflib & Political & Aspen & 1 & 11 & 2459	& City council elections\\
			Preflib & Political & ERS	& 13 & $\sim$ 12 & $\sim$ 988 & Various elections held by non-profit organizations,\\ 
			        & & & & & & trade unions, and professional organizations  \\
            \midrule
			Preflib & Sport & Figure Skating & 40 & $\sim$ 23 & 9 & Figure skating  \\
			This paper & Sport & Speed Skating & 13 & $\sim$ 14 & 196 & Speed skating  \\
            This paper & Sport & TDF & 12 & $\sim$ 55 & $\sim$ 22 & Tour de France\\	
            This paper & Sport & GDI & 23 & $\sim$ 152 & 20  & Giro d’Italia	\\	
            \midrule
            Preflib & Survey & T-Shirt & 1 & 11 & 30 & Preferences over T-Shirt logo \\	
            Preflib & Survey & Sushi & 1 & 10 & 5000 & Preferences over Sushi\\	
            Preflib & Survey & Cities & 2 & 42 & 392 & Preferences over cities	\\			
			\bottomrule
	\end{tabular}}
	\caption{\label{tab:preflib_selected} Each row contains a description of one of the real-life datasets we consider. In the column ``\# Selected Elections'', we denote the number of elections we finally select from the respective dataset.}
\end{table*}
\end{document}